\documentclass[pra, onecolumn, superscriptaddress, nofootinbib]{revtex4-2}

\usepackage{amsmath, amsfonts, amsthm}
\usepackage{physics, braket}
\usepackage{graphicx}
\usepackage{xcolor}
\usepackage{tikz}
\usetikzlibrary{arrows.meta,decorations.pathreplacing}
\usepackage{booktabs}
\usepackage{hyperref}
\usepackage{orcidlink}
\usepackage[capitalize]{cleveref}

\theoremstyle{plain}
\newtheorem{theorem}{Theorem}[section]
\newtheorem{proposition}[theorem]{Proposition}

\newtheorem{lemma}[theorem]{Lemma}
\newtheorem{corollary}[theorem]{Corollary}

\theoremstyle{definition}
\newtheorem{definition}[theorem]{Definition}

\numberwithin{equation}{section}

\hypersetup{
    colorlinks=true,
    linkcolor={rgb,255:red,0;green,92;blue,185},
    filecolor=magenta,      
    urlcolor=teal,
    citecolor={rgb,255:red,0;green,140;blue,0},
    pdfpagemode=UseNone,
    pdfstartview=FitH
}

\usepackage[a4paper, left=1in, right=1in, top=1in, bottom=1in]{geometry}
\usepackage{quantikz}

\begin{document}

\title{Estimation of Quantum Umlaut Information}

\author{Beomjoon Kim\,\orcidlink{0009-0007-5473-3487}}
\email{beomjoon01@snu.ac.kr}
\affiliation{Department of Psychology, Seoul National University, Seoul 08826, Korea}
\affiliation{Team QST, Seoul National University, Seoul 08826, Korea}

\author{Donghwa Ji\,\orcidlink{0009-0008-2798-7251}}
\email{donghwa722@gmail.com}
\affiliation{College of Liberal Studies, Seoul National University, Seoul 08826, Korea}
\affiliation{Team QST, Seoul National University, Seoul 08826, Korea}

\author{Kabgyun Jeong\,\orcidlink{0000-0001-7628-7835}}
\email{kgjeong6@snu.ac.kr}
\affiliation{Institute of Computer Technology, College of Engineering, Seoul National University, Seoul 08826, Korea}
\affiliation{Team QST, Seoul National University, Seoul 08826, Korea}

\date{\today}

\begin{abstract}
Quantifying correlations is important for understanding how quantum states and channels support information processing. Quantum umlaut information $U$ connects such correlations to reliability in hypothesis testing and certain communication settings, but its known formulas require classical descriptions of the underlying states. To estimate correlations without these descriptions, we introduce measured quantum umlaut information $U^M$ by replacing Umegaki relative entropy with measured relative entropy. Its variational formulation leads to a quantum generative adversarial framework for estimating $U^M$ from measurement statistics of unknown bipartite states and classical--quantum channels. This framework optimizes reference states and measurements without reconstructing the target states or channel outputs. For classical--quantum channels, it also optimizes the input probabilities. We further analyze the relation between $U^M$ and $U$, deriving a state-dependent gap bound and characterizing equality for full-rank bipartite states. Numerical simulations show trained estimates close on average to independently computed numerical references for $U^M$ across the tested states and channels. These estimates could help assess how well quantum memories preserve correlations and how noise or drift affects input--output dependence in quantum transmitters and state-preparation devices.
\end{abstract}
\maketitle

\tableofcontents

\section{Introduction}
In communication and state discrimination, a central challenge is to extract reliable information from a limited number of observations. Communicating a message involves encoding it in a physical system and recovering it through observations of that system. State discrimination uses measurements to decide which of several possible quantum states was prepared. Both tasks rely on correlations between what is encoded or prepared and what is observed. A single outcome may nevertheless be compatible with several possibilities, so additional observations can be needed to reach a reliable decision. 

In many such tasks, independent repetitions allow the error being minimized to decrease exponentially with the number of observations~\cite{Gallager1965,Audenaert2007}. The corresponding error exponent describes the asymptotic rate of that decrease: a larger exponent means that a given small error can be reached with fewer repetitions at leading order. This motivates asking how the correlations available in a task relate to its best achievable error exponent.

To test whether a device really produces the intended correlations between two quantum systems, one can compare its target preparation with independent preparations of those systems. The independent alternatives cover every possible state of the second system while preserving the individual state of the first. Observing either system alone cannot establish correlations, since the same individual states can be prepared independently. The evidence must instead come from measurements that probe the two systems together. 

A test can make two different mistakes: it can accept an independent preparation as the intended one, or fail to recognize the intended preparation when it is present. We keep the first probability below a prescribed tolerance between zero and one for every independent alternative and ask how rapidly the second can be suppressed with additional copies. In this way, the reliability of the distinction provides a task-specific measure of correlations.

Building on classical umlaut information~\cite{GirardiClassical2025}, Girardi et al.\ introduced quantum umlaut information to quantify this distinction~\cite{GirardiQuantum2025}. For a target state $\rho_{AB}$ of two systems $A$ and $B$, it is defined as
\begin{equation}\label{eq:intro-umlaut-information} U(A;B)_\rho:=\min_{\sigma_B}D(\rho_A\otimes\sigma_B\Vert\rho_{AB}), \end{equation}
where $D$ is the Umegaki relative entropy~\cite{Umegaki1962}, $\rho_A$ is the state of $A$ alone, and $\sigma_B$ ranges over states of $B$. The product $\rho_A\otimes\sigma_B$ represents the family of independent preparations described above. For this test, $U(A;B)_\rho$ gives the optimal exponent when measurements may act jointly on the available copies. It vanishes exactly for product states.

This state quantity also extends to a classical--quantum (CQ) channel $\mathcal N:x\mapsto\rho_x$, which prepares a quantum state $\rho_x$ on $B$ for each input label $x$ in a finite set. Choosing the labels with probabilities $p=(p_x)$ and recording them in a classical register $X$ gives the joint label--output state
\begin{equation}\label{eq:intro-cq-state} \omega_{XB}^p:=\sum_x p_x\lvert x\rangle\!\langle x\rvert_X\otimes\rho_x. \end{equation}

The CQ channel information is defined by optimizing the state quantity over these input probabilities:
\begin{equation}\label{eq:intro-cq-umlaut} U(\mathcal N):=\sup_p U(X;B)_{\omega^p}, \end{equation}
where the supremum is over all probability distributions on the input labels.

$U(\mathcal N)$ also connects input--output correlations to transmission reliability. Consider sending information at a small but positive rate, measured by the information transmitted per channel use. At a fixed rate, the optimal error exponent describes how rapidly decoding errors can decrease as the number of channel uses grows. In the zero-rate limit, where this rate approaches zero from above, the optimal exponent with activated non-signalling assistance is $U(\mathcal N)$~\cite{GirardiQuantum2025}.

These operational interpretations motivate assessing the correlations of actual sources and channels even when their state descriptions are unavailable. An experimental source can provide independent copies of a joint state without revealing its density matrix. A CQ device may likewise allow chosen input labels and output measurements while the matrices $\rho_x$ remain unknown. Evaluating the known full-rank formulas for $U$ requires classical state descriptions. Tomography can supply them, but reconstructs the full states when the desired output is a correlation quantity~\cite{Haah2017}. We therefore seek to estimate the relevant correlation quantity directly from measurement access, without that preliminary reconstruction.

Variational and neural methods already estimate quantum mutual information~\cite{Shin2024}, quantum entropies and measured relative entropies~\cite{Goldfeld2024}, and quantum relative entropies~\cite{LuFang2026} from copies of unknown states. For measured relative entropies, error and copy-complexity bounds are available under assumptions on the states and parametrized models~\cite{Sreekumar2026}. Estimates of joint and marginal entropies have also been combined to evaluate bounds on the cost of exchanging quantum states~\cite{Ji2026}. For umlaut information, divergence estimation must be coupled to an additional optimization: the independent comparison state $\rho_A\otimes\sigma_B$ must itself be selected by minimizing over $\sigma_B$. A CQ estimator must also optimize the input probabilities. We develop a measurement-based variational framework for these coupled optimizations.

The operational interpretation of $U$ above allows measurements to act jointly on multiple copies. Here, we seek an objective that can be evaluated by measuring one prepared copy at a time. We therefore replace the Umegaki relative entropy in the product-state comparison by measured relative entropy $D_M$. The resulting measured quantum umlaut information is defined as
\begin{equation}\label{eq:intro-mqui-definition} U^M(A;B)_\rho:=\min_{\sigma_B}D_M(\rho_A\otimes\sigma_B\Vert\rho_{AB}). \end{equation}

For two states $\tau$ and $\omega$, measured relative entropy is defined by optimizing over a common quantum measurement~\cite{BFT2017}:
\begin{equation}\label{eq:intro-measured-relative-entropy} D_M(\tau\Vert\omega):=\sup_{\mathcal M}D_{\mathrm{KL}}\!\left(p_{\mathcal M}\Vert q_{\mathcal M}\right), \end{equation}
where $\mathcal M=\{M_y\}_y$ ranges over all quantum measurements (POVMs), $p_{\mathcal M}(y)=\Tr(\tau M_y)$ and $q_{\mathcal M}(y)=\Tr(\omega M_y)$ are their outcome probabilities, and $D_{\mathrm{KL}}$ is the classical Kullback--Leibler divergence. This optimization selects the measurement that best distinguishes the two states according to their classical outcome distributions. The exact measured quantity satisfies $U^M\le U$. For bipartite states with finite information, we bound the intrinsic gap $U-U^M$; for full-rank states, we characterize when this gap vanishes. These results clarify the approximation introduced by the measured formulation, separately from errors due to circuit restrictions, incomplete optimization, and finite measurement samples.

We implement these competing optimizations as a quantum generative adversarial network~\cite{DallaireDemers2018}. The generator prepares and varies $\sigma_B$ to minimize the objective while leaving $\rho_A$ fixed. The discriminator searches for a distinguishing measurement $\mathcal M$ through a parametrized measurement basis and classical outcome scores, using an equivalent variational representation of $D_M$~\cite{BFT2017}. It maximizes the same objective that the generator minimizes. The unrestricted game value is $U^M$, which the parametrized training seeks to approximate.

For CQ channels, the same comparison uses a single reference state shared by all input labels. This preserves the interpretation of the reference as a device whose output is independent of its input. We additionally optimize the probabilities of choosing the labels to obtain the measured channel quantity $U^M(\mathcal N)$. The exact quantity lower bounds the assisted zero-rate error exponent described above.

In numerical simulations, we assess the optimization by comparing trained objective values with independently computed numerical references for $U^M$. Comparisons with $U$ then distinguish the deviation from the measured optimum from the intrinsic gap to the original quantity. The study uses complete measurement-outcome probabilities for two-, four-, and six-qubit mixed bipartite states and CQ channels whose joint input-label--output states have the same total sizes. The trained estimates are close to the measured references on average across the tested states and channels, supporting the feasibility of the proposed framework.

The remainder of this paper is organized as follows. Section~\ref{sec:qui} presents the information measures and their properties. Section~\ref{sec:gan} develops the variational estimation framework, and Section~\ref{sec:sim} reports the numerical simulations. Section~\ref{sec:conclusion} discusses applications, limitations, and future directions.

\section{Theoretical foundations}
\label{sec:qui}
\subsection{Quantum umlaut information}
\label{subsec:qui-def}

Operationally, the quantum umlaut information quantifies how distinguishable a bipartite state is from product states that share the same marginal on $A$. We recall the state and channel definitions, properties, and operational interpretations from Girardi et al.~\cite{GirardiQuantum2025}.

Throughout this paper, $D(\cdot\Vert\cdot)$ denotes the Umegaki quantum relative entropy, and $S(\cdot)$ denotes the von Neumann entropy. All Hilbert spaces are finite-dimensional, and all logarithms are natural.

\begin{definition}[Quantum umlaut information]
\label{def:umlaut}
For a bipartite quantum state $\rho_{AB}$, the quantum umlaut information is defined as
\begin{equation}
    U(A;B)_{\rho}
    :=
    \min_{\sigma_B \in \mathcal{D}(\mathcal{H}_B)}
    D\!\left(
        \rho_A \otimes \sigma_B
        \,\middle\|\,
        \rho_{AB}
    \right),
    \label{eq:def-umlaut}
\end{equation}
where $\rho_A=\Tr_B(\rho_{AB})$ and $\mathcal{D}(\mathcal{H}_B)$ denotes the set of density operators on system $B$.
\end{definition}

The quantum umlaut information is nonnegative and vanishes if and only if $\rho_{AB}=\rho_A\otimes\rho_B$. It is also nonincreasing under local quantum channels. Moreover, $U(A;B)_\rho$ is finite for every full-rank state $\rho_{AB}$, while for general states its finiteness is characterized by a support condition. Unlike the quantum mutual information, the quantum umlaut information is generally asymmetric, i.e., $U(A;B)_\rho \neq U(B;A)_\rho$.

Despite its variational definition, the quantum umlaut information admits a closed-form expression, which also identifies the unique minimizing state $\sigma_B$.

\begin{proposition}[Closed-form expression]
\label{prop:umlaut-closed-form}
Let $\rho_{AB} \succ 0$ and define the Hermitian operator
\begin{equation}
    X_B
    :=
    \Tr_A\!\left[
        (\rho_A \otimes I_B)\log \rho_{AB}
    \right].
\end{equation}
Then,
\begin{equation}
    U(A;B)_\rho
    =
    -S(\rho_A)
    -
    \log \Tr\!\left[e^{X_B}\right].
    \label{eq:umlaut-closed-form}
\end{equation}
Moreover, the unique minimizing state is
\begin{equation}
    \sigma_B^\star
    =
    \frac{e^{X_B}}{\Tr[e^{X_B}]},
    \label{eq:umlaut-marginal}
\end{equation}
which is referred to as the umlaut marginal of $\rho_{AB}$.
\end{proposition}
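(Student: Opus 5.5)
The plan is to expand the relative entropy, isolate its dependence on $\sigma_B$, and recognize the remainder as a relative entropy to a Gibbs-type state. Because $\rho_{AB}\succ 0$, the support condition holds for every $\sigma_B$, so $D(\rho_A\otimes\sigma_B\Vert\rho_{AB})$ is finite. We can then write
\begin{equation*}
D(\rho_A\otimes\sigma_B\Vert\rho_{AB})
=\Tr[(\rho_A\otimes\sigma_B)\log(\rho_A\otimes\sigma_B)]-\Tr[(\rho_A\otimes\sigma_B)\log\rho_{AB}].
\end{equation*}
For the first term we use $\log(\rho_A\otimes\sigma_B)=\log\rho_A\otimes I_B+I_A\otimes\log\sigma_B$, understood on supports; the convention $0\log 0=0$ handles singular $\sigma_B$. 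This gives $-S(\rho_A)-S(\sigma_B)$. For the second term we write $\rho_A\otimes\sigma_B=(\rho_A\otimes I_B)(I_A\otimes\sigma_B)$ and take the partial trace over $A$. The cyclicity needed is only in the $A$ factor, where $\sigma_B$ acts trivially. The result is $\Tr[\sigma_B X_B]$, with $X_B$ as defined in the statement.

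$X_B$ is Hermitian. To see this, note that $(\rho_A\otimes I)^{1/2}\log\rho_{AB}(\rho_A\otimes I)^{1/2}$ is Hermitian and has the same partial trace over $A$ by partial cyclicity. The objective therefore becomes
\begin{equation*}
-S(\rho_A)-S(\sigma_B)-\Tr[\sigma_B X_B].
\end{equation*}

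Next we put $Z:=\Tr[e^{X_B}]$ and $\sigma_B^\star:=e^{X_B}/Z$. The state $\sigma_B^\star$ is full rank, and $\log\sigma_B^\star=X_B-(\log Z)I_B$. Hence
\begin{equation*}
D(\sigma_B\Vert\sigma_B^\star)=-S(\sigma_B)-\Tr[\sigma_B X_B]+\log Z,
\end{equation*}
which is the Gibbs variational identity. Substituting gives
\begin{equation*}
D(\rho_A\otimes\sigma_B\Vert\rho_{AB})=-S(\rho_A)-\log Z+D(\sigma_B\Vert\sigma_B^\star).
\end{equation*}
By Klein's inequality, $D(\sigma_B\Vert\sigma_B^\star)\ge0$, with equality if and only if $\sigma_B=\sigma_B^\star$. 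This yields the minimum value $-S(\rho_A)-\log\Tr[e^{X_B}]$ and shows that $\sigma_B^\star$ is the unique minimizer. Since the minimizer exists, the minimum in \cref{def:umlaut} is attained.

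The only delicate step is the handling of the trace term. First, the partial-trace manipulation must be justified even though $\rho_A\otimes I_B$ and $\log\rho_{AB}$ do not commute; the cyclicity used is only in the $A$ factor, so this holds. Second, for singular $\sigma_B$ the entropy of the product must be treated on its support. The expansion $D=\Tr\tau\log\tau-\Tr\tau\log\rho_{AB}$ stays valid there because $\rho_{AB}$ is invertible, so no extra care is needed. Everything else is a direct computation.
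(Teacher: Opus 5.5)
Your proof is correct: the identity $D(\rho_A\otimes\sigma_B\Vert\rho_{AB})=-S(\rho_A)-\log\Tr[e^{X_B}]+D(\sigma_B\Vert\sigma_B^\star)$, combined with Klein's inequality, gives both the closed-form value and the uniqueness of the minimizer. The paper does not prove this proposition itself but recalls it from Girardi et al.~\cite{GirardiQuantum2025}, where it rests on essentially this Gibbs-variational decomposition, and the paper's proof of Proposition~\ref{prop:mqui-state-gap} uses the operator form of your identity, $\Tr_A[(\rho_A\otimes I_B)(\log\tau_{AB}^\star-\log\rho_{AB})]=U(A;B)_\rho I_B$.
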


Quantum umlaut information extends from bipartite states to quantum channels by optimizing over the channel input. Let $\mathcal N_{A\to B}$ be a quantum channel, that is, a completely positive trace-preserving map, and let $A'$ be a reference system isomorphic to $A$. For a pure state $\psi_{A'A}=\lvert\psi\rangle\langle\psi\rvert_{A'A}$, define
\begin{equation}
    \omega_{A'B}^{\psi}
    :=(\operatorname{id}_{A'}\otimes\mathcal N_{A\to B})(\psi_{A'A}),
    \qquad
    \psi_{A'}:=\Tr_A\psi_{A'A}.
    \label{eq:channel-qui-output}
\end{equation}
The reference system is retained while $A$ passes through the channel, so that $\omega_{A'}^{\psi}=\psi_{A'}$.

\begin{definition}[Quantum channel umlaut information]
\label{def:channel-umlaut}
The quantum umlaut information of $\mathcal N_{A\to B}$ is defined as
\begin{equation}
    U(\mathcal N)
    :=
    \sup_{\psi_{A'A}}
    U(A';B)_{\omega^{\psi}}
    =
    \sup_{\psi_{A'A}}
    \min_{\sigma_B\in\mathcal D(\mathcal H_B)}
    D\!\left(
        \psi_{A'}\otimes\sigma_B
        \,\middle\|\,
        \omega_{A'B}^{\psi}
    \right),
    \label{eq:def-channel-umlaut}
\end{equation}
where the supremum is taken over all pure states $\psi_{A'A}$.
\end{definition}

Consider a replacer channel $\mathcal R_\sigma(\tau_A)=\Tr(\tau_A)\sigma_B$ that maps every input state to a fixed output state $\sigma_B$. When applied to the input subsystem of $\psi_{A'A}$, it produces the product state $\psi_{A'}\otimes\sigma_B$. The state-level comparison thus extends to distinguishing $\mathcal N$ from input-independent replacer channels, with the input state optimized. The channel information is nonnegative and vanishes if and only if $\mathcal N$ is a replacer channel.

The channel umlaut information $U(\mathcal N)$ also provides a lower bound on the zero-rate reliability exponent for classical information transmission over $\mathcal N$ with assistance from activated non-signalling correlations. For general quantum channels, joint input optimization across multiple uses can yield $U(\mathcal N^{\otimes n})>nU(\mathcal N)$, so the single-use quantity need not capture the full asymptotic performance. The optimal exponent is given by the information per use in the limit of large blocks, namely the regularized quantity $U^\infty(\mathcal N):=\lim_{n\to\infty}n^{-1}U(\mathcal N^{\otimes n})$.

For a classical--quantum (CQ) channel, the input label can be retained in a classical reference register. At a fixed input distribution, any other reference conditioned on the label can be prepared by a local channel acting on this record, which cannot increase the state umlaut information. It therefore suffices to optimize the probabilities of the input labels.

\begin{proposition}[CQ channel expression]
\label{prop:cq-umlaut}
Let $\mathcal N$ be a CQ channel that maps each input symbol $x$ in a finite alphabet $\mathcal X$ to an output state $\rho_x\in\mathcal D(\mathcal H_B)$. For a distribution $p\in\Delta(\mathcal X)$, where $\Delta(\mathcal X)$ denotes the probability simplex, define the joint state of $X$ and the channel output $B$ by
\begin{equation}
    \omega_{XB}^{p}
    :=
    \sum_{x\in\mathcal X}
    p_x\lvert x\rangle\!\langle x\rvert_X\otimes\rho_x,
    \label{eq:cq-labelled-state}
\end{equation}
where $X$ is a classical register recording the input label in an orthonormal basis $\{\lvert x\rangle_X\}_{x\in\mathcal X}$.\\
Writing $\omega_X^p=\Tr_B\omega_{XB}^p$, we have
\begin{equation}
    U(\mathcal N)
    =\sup_{p\in\Delta(\mathcal X)}U(X;B)_{\omega^p}
    =\sup_{p\in\Delta(\mathcal X)}
    \min_{\sigma_B\in\mathcal D(\mathcal H_B)}
    D\!\left(
        \omega_X^p\otimes\sigma_B
        \,\middle\|\,
        \omega_{XB}^p
    \right).
    \label{eq:cq-channel-umlaut}
\end{equation}
\end{proposition}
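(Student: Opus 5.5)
We view a CQ channel $\mathcal N:x\mapsto\rho_x$ as the quantum channel $\mathcal N_{A\to B}(\tau_A)=\sum_x\langle x|\tau_A|x\rangle\rho_x$, with $A$ spanned by $\{\lvert x\rangle\}_{x\in\mathcal X}$, and prove two inequalities between $U(\mathcal N)$ from Definition~\ref{def:channel-umlaut} and $\sup_p U(X;B)_{\omega^p}$. The second equality in \eqref{eq:cq-channel-umlaut} is just Definition~\ref{def:umlaut} applied to $\omega^p_{XB}$.

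\emph{Achievability} ($\ge$). Fix $p$ and take the purification $\lvert\psi\rangle_{A'A}=\sum_x\sqrt{p_x}\lvert x\rangle_{A'}\lvert x\rangle_A$. Then
\begin{equation}
\omega^\psi_{A'B}=\sum_{x,x'}\sqrt{p_xp_{x'}}\,\lvert x\rangle\!\langle x'\rvert_{A'}\otimes\mathcal N(\lvert x\rangle\!\langle x'\rvert_A)=\sum_x p_x\lvert x\rangle\!\langle x\rvert\otimes\rho_x,
\end{equation}
because $\mathcal N$ annihilates the off-diagonal operators $\lvert x\rangle\!\langle x'\rvert$ with $x\neq x'$. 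Hence $\omega^\psi_{A'B}=\omega^p_{XB}$ and $\psi_{A'}=\omega^p_X$, so $U(\mathcal N)\ge U(X;B)_{\omega^p}$ for every $p$.

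\emph{Converse} ($\le$). Fix any pure $\psi_{A'A}$ and set $p_x=\langle x|\psi_A|x\rangle$. Since $\mathcal N=\mathcal N\circ\Delta_A$, where $\Delta_A$ is the completely dephasing channel in the label basis, we get $\omega^\psi_{A'B}=\sum_x \mathcal E_x\otimes\rho_x$ with $\mathcal E_x:=\Tr_A[(I\otimes\lvert x\rangle\!\langle x\rvert)\psi_{A'A}]$, which is positive with $\Tr\mathcal E_x=p_x$. Define the local CPTP map $\Lambda_{X\to A'}(\lvert x\rangle\!\langle x'\rvert)=\delta_{xx'}\,\mathcal E_x/p_x$ (for $p_x=0$, replace $\mathcal E_x/p_x$ by an arbitrary state). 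Then $\Lambda(\omega^p_{XB})=\omega^\psi_{A'B}$, and $\Lambda(\omega^p_X)=\sum_x\mathcal E_x=\psi_{A'}$. We now apply data processing for the Umegaki relative entropy: for each $\sigma_B$,
\begin{equation}
D(\psi_{A'}\otimes\sigma_B\Vert\omega^\psi_{A'B})=D\big((\Lambda\otimes\mathrm{id})(\omega^p_X\otimes\sigma_B)\,\big\Vert\,(\Lambda\otimes\mathrm{id})(\omega^p_{XB})\big)\le D(\omega^p_X\otimes\sigma_B\Vert\omega^p_{XB}).
\end{equation}
Taking the minimum over $\sigma_B$ gives $U(A';B)_{\omega^\psi}\le U(X;B)_{\omega^p}$, and then taking the supremum over $\psi$ gives the claim. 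This is exactly the monotonicity under local channels quoted after Definition~\ref{def:umlaut}. It also covers the cases where $U$ is infinite, since data processing holds in the extended reals.

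The main point needing care is the converse step: verifying that the reference state of a general purification factors through the classical register via a single local channel $\Lambda$, which maps the marginal and the joint state simultaneously. This works because $\Lambda$ acts only on $X$ and $\mathcal E_x$ does not depend on $B$. The zero-probability labels are harmless, since they contribute nothing to either side.
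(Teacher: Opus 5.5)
Your proof is correct and follows the same route as the paper. The paper justifies this proposition, and proves its measured analogue Proposition~\ref{prop:cq-mqui} in full, with the same two steps: the canonical purification $\sum_x\sqrt{p_x}\lvert x\rangle_{A'}\lvert x\rangle_A$ for achievability, and, for the converse, a measure-and-prepare channel $\Lambda_{X\to A'}$ on the classical label combined with monotonicity under local channels. Your states $\mathcal E_x/p_x$ are exactly the paper's pure states $\lvert\phi_x\rangle\!\langle\phi_x\rvert$ from the decomposition $\lvert\psi\rangle_{A'A}=\sum_x\sqrt{p_x}\lvert\phi_x\rangle_{A'}\lvert x\rangle_A$.
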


Thus, evaluating the CQ channel information amounts to evaluating the state quantity for $\omega_{XB}^p$ and optimizing the input distribution. For each fixed $p$, the product reference $\omega_X^p\otimes\sigma_B$ is the joint state of the recorded input label and the replacer output, so the same state-discrimination interpretation applies. Moreover, the channel umlaut information is additive under tensor products of CQ channels. Consequently, $U^\infty(\mathcal N)=U(\mathcal N)$, so the CQ channel umlaut information itself equals the zero-rate reliability exponent in the assisted communication setting described above.

While Proposition~\ref{prop:umlaut-closed-form} provides an exact characterization of the quantum umlaut information, its evaluation requires a classical description of $\rho_{AB}$, in particular through the matrix logarithm $\log\rho_{AB}$. The channel expressions likewise involve matrix logarithms of the corresponding joint reference--output states. These requirements become restrictive when the target state or channel outputs are accessible only through repeated preparations and measurement outcomes.

\subsection{Measured quantum umlaut information}
\label{subsec:measured-qui}

To express the product-state comparison in terms of measurement outcomes, we replace the Umegaki relative entropy in $U$ with measured relative entropy. We first introduce the resulting state quantity and analyze its relation to $U$, then extend the construction to quantum channels.

The measured relative entropy between quantum states $\tau$ and $\omega$ is defined as~\cite{BFT2017}
\begin{equation}
    D_M(\tau\Vert\omega)
    :=
    \sup_{\mathcal{M}}
    D_{\mathrm{KL}}\!\left(
        p_{\mathcal{M}}
        \,\middle\|\,
        q_{\mathcal{M}}
    \right),
\end{equation}
where the supremum is taken over all POVMs $\mathcal{M}$, and $p_{\mathcal{M}}$ and $q_{\mathcal{M}}$ denote the corresponding measure\-ment-outcome distributions obtained from $\tau$ and $\omega$, respectively. Thus, $D_M(\tau\Vert\omega)$ quantifies the largest classical distinguishability between $\tau$ and $\omega$ that can be extracted through a quantum measurement.

Motivated by this measurement-based notion of distinguishability, we introduce the measured quantum umlaut information.

\begin{definition}[Measured quantum umlaut information]
\label{def:mqui}
For a bipartite quantum state $\rho_{AB}$, the measured quantum umlaut information is defined as
\begin{equation}
    U^M(A;B)_{\rho} := \min_{\sigma_B \in \mathcal{D}(\mathcal{H}_B)} D_M(\rho_A \otimes \sigma_B \Vert \rho_{AB}).
    \label{eq:def-mqui}
\end{equation}
\end{definition}

The measured quantum umlaut information is intended to retain the correlation-detection properties of the quantum umlaut information while restricting the distinguishability measure to information accessible through measurements. Since the measured relative entropy satisfies
$D_M(\tau\Vert\omega)\le D(\tau\Vert\omega)$, it immediately follows that
\begin{equation}
U^M(A;B)_\rho \le U(A;B)_\rho.
\end{equation}
More generally, the two quantities satisfy the following properties.

\begin{theorem}[Relation between $U^M$ and $U$]
\label{thm:mqui-properties}
For any bipartite state $\rho_{AB}$, the following properties hold. 
\begin{enumerate}
    \item $U^M(A;B)_\rho \ge 0$, with equality if and only if
    $\rho_{AB}=\rho_A\otimes\rho_B$;
    
    \item $U^M(A;B)_\rho<\infty$ if and only if
    $U(A;B)_\rho<\infty$;
    
    \item $U^M$ is invariant under local unitaries and nonincreasing under local quantum channels;
    
    \item whenever the quantities are finite,
    \begin{equation}
        U^M(A;B)_\rho
        \le
        U(A;B)_\rho
        \le
        U^M(A;B)_\rho
        +
        \log |\operatorname{spec}(\rho_{AB})|
        \label{eq:mqui-gap}
    \end{equation}
    where $|\operatorname{spec}(\rho_{AB})|$ denotes the number of distinct nonzero eigenvalues of $\rho_{AB}$.
\end{enumerate}
\end{theorem}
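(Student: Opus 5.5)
We prove the four items in order. The tools are standard facts about measured relative entropy: data processing, positive definiteness, support conditions, and the pinching inequality. The one nontrivial ingredient is the pinching inequality $D(\tau\Vert\omega)\le D_M(\tau\Vert\omega)+\log|\operatorname{spec}(\omega)|$, which is needed for item 4.

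\emph{Item 1.} Nonnegativity holds because every $D_{\mathrm{KL}}(p_{\mathcal M}\Vert q_{\mathcal M})\ge 0$. If $\rho_{AB}=\rho_A\otimes\rho_B$, choosing $\sigma_B=\rho_B$ gives $U^M=0$. For the converse, first note that the minimum in Definition~\ref{def:mqui} is attained. The map $\sigma_B\mapsto D_M(\rho_A\otimes\sigma_B\Vert\rho_{AB})$ is lower semicontinuous, being a supremum of lower semicontinuous functions, and $\mathcal D(\mathcal H_B)$ is compact. So $U^M=0$ yields some $\sigma_B$ with $D_M(\rho_A\otimes\sigma_B\Vert\rho_{AB})=0$. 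Then every POVM, in particular an informationally complete one, produces identical outcome distributions, so $\rho_{AB}=\rho_A\otimes\sigma_B$. Tracing out $A$ gives $\sigma_B=\rho_B$.

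\emph{Item 2.} The direction $U^M\le U<\infty$ is immediate. Conversely, it is known that $D_M(\tau\Vert\omega)<\infty$ if and only if $\operatorname{supp}\tau\subseteq\operatorname{supp}\omega$, which is exactly the finiteness condition for $D$. To see the "only if" part, measure with the two-outcome POVM $\{\Pi_\omega, I-\Pi_\omega\}$. If the support condition fails, then $\Tr\tau(I-\Pi_\omega)>0$ while $\Tr\omega(I-\Pi_\omega)=0$, so the KL divergence is infinite. Hence if $U^M<\infty$, some $\sigma_B$ satisfies the support condition, and for that $\sigma_B$ the quantity $D(\rho_A\otimes\sigma_B\Vert\rho_{AB})$ is finite, so $U<\infty$.

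\emph{Item 3.} Let $\mathcal E_A\otimes\mathcal F_B$ be local channels and set $\rho'=(\mathcal E\otimes\mathcal F)(\rho)$. Then $\rho'_A=\mathcal E(\rho_A)$, and $(\mathcal E\otimes\mathcal F)(\rho_A\otimes\sigma_B)=\rho'_A\otimes\mathcal F(\sigma_B)$. Data processing for $D_M$ holds because composing a POVM with the adjoint channel gives another POVM. Applying it to the optimal $\sigma_B$ gives
\[
U^M(A;B)_{\rho'}\le D_M\bigl(\rho'_A\otimes\mathcal F(\sigma_B)\Vert\rho'_{AB}\bigr)\le U^M(A;B)_\rho .
\]
For local unitaries, monotonicity applied to the unitary and then to its inverse gives invariance.

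\emph{Item 4.} The lower bound is already noted in the text. For the upper bound, fix the minimizer $\sigma_B$ of $U^M$ and let $\mathcal P$ be the pinching in the eigenbasis of $\rho_{AB}$, which has $k=|\operatorname{spec}(\rho_{AB})|$ blocks. Zero eigenvalues may be ignored because the supports are nested in the finite case. Write $\tau=\rho_A\otimes\sigma_B$ and $\omega=\rho_{AB}$. Since $\mathcal P(\omega)=\omega$, we have $\Tr\mathcal P(\tau)\log\omega=\Tr\tau\log\omega$, and hence
\[
D(\tau\Vert\omega)=D(\mathcal P(\tau)\Vert\omega)+S(\mathcal P(\tau))-S(\tau).
\]
The pinching inequality $\tau\le k\,\mathcal P(\tau)$ together with operator monotonicity of the logarithm gives $S(\mathcal P(\tau))-S(\tau)=D(\tau\Vert\mathcal P(\tau))\le\log k$.

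Next, $\mathcal P(\tau)$ and $\omega$ commute, so $D(\mathcal P(\tau)\Vert\omega)$ equals a classical KL divergence in a common eigenbasis. That basis, refined inside each block, is realized by a projective measurement in the eigenbasis of $\omega$. That measurement is blind to off-block coherences, so its outcome distributions under $\tau$ and $\mathcal P(\tau)$ coincide, and therefore $D(\mathcal P(\tau)\Vert\omega)\le D_M(\tau\Vert\omega)$. Combining the two bounds,
\[
U\le D(\tau\Vert\omega)\le D_M(\tau\Vert\omega)+\log k=U^M+\log k .
\]

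I expect the main obstacle to be this last step. It requires choosing the measurement basis inside degenerate eigenspaces of $\omega$ so that it also diagonalizes the pinched $\mathcal P(\tau)$, and one must check that the support issues from zero eigenvalues are handled, which is why the bound counts only nonzero eigenvalues.
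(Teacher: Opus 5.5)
Your proof is correct and follows the paper's route. Items 1--3 reduce to standard properties of $D_M$ (faithfulness, the support condition via the test $\{\Pi_\omega, I-\Pi_\omega\}$, and data processing), and item 4 applies the spectral pinching bound $D(\tau\Vert\omega)\le D_M(\tau\Vert\omega)+\log|\operatorname{spec}(\omega)|$ at the optimal $\sigma_B$. The only difference is that you derive this bound yourself, via $\tau\le k\,\mathcal P(\tau)$ and a block-adapted eigenbasis measurement, and you justify attainment of the minimum by lower semicontinuity, whereas the paper cites these facts.
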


\begin{proof}[Proof sketch]
The first three statements follow from the corresponding properties of the measured relative entropy. For the last statement, we use the spectral pinching bound~\cite{BBH2021}
\begin{equation}
    D_M(\tau\Vert\omega)
    \le
    D(\tau\Vert\omega)
    \le
    D_M(\tau\Vert\omega)
    +
    \log |\operatorname{spec}(\omega)|,
\end{equation}
and minimize over $\sigma_B$. Since the additive term is independent of $\sigma_B$, this yields Eq.~\eqref{eq:mqui-gap}.
\end{proof}

Theorem~\ref{thm:mqui-properties} shows that $U^M$ preserves the basic correlation-detection properties of $U$ and provides a lower approximation to it.

For finite information quantities, accuracy relative to $U$ depends on two sources of error: the error in estimating $U^M$, including circuit restrictions, incomplete optimization, and finite measurement samples, and the intrinsic gap $U-U^M$. The following analysis concerns the intrinsic gap, which for bipartite states is defined as
\begin{equation}\label{eq:mqui-intrinsic-gap} \Delta_\rho:=U(A;B)_\rho-U^M(A;B)_\rho. \end{equation}
This gap arises from the choice of measured relative entropy and persists even with exact measurement statistics and unrestricted optimization. The spectral bound above is uniform, but it need not describe how close the two quantities are for a particular state. For full-rank states, the optimizing product reference for $U$ provides a sharper characterization.

\begin{proposition}[State-dependent gap and equality condition]
\label{prop:mqui-state-gap}
Let $\rho_{AB}\succ0$, and let $\sigma_B^\star$ be its umlaut marginal. Define the optimal product reference $\tau_{AB}^\star=\rho_A\otimes\sigma_B^\star$ and its logarithmic difference from the target state by
\begin{equation}\label{eq:mqui-gap-witness} H^\star:=\log\tau_{AB}^\star-\log\rho_{AB}. \end{equation}
The state-dependent quantity
\begin{equation}\label{eq:mqui-gap-gamma} \Gamma_\rho:=\log\Tr\!\left[\rho_{AB}e^{H^\star}\right] \end{equation}
is nonnegative and bounds the intrinsic gap as
\begin{equation}\label{eq:mqui-state-gap} 0\le\Delta_\rho\le\min\!\left\{U(A;B)_\rho,\,\log|\operatorname{spec}(\rho_{AB})|,\,\Gamma_\rho\right\}. \end{equation}
Moreover,
\begin{equation}\label{eq:mqui-equality-condition} U^M(A;B)_\rho=U(A;B)_\rho\quad\Longleftrightarrow\quad[\rho_{AB},\tau_{AB}^\star]=0. \end{equation}
\end{proposition}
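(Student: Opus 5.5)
The plan rests on the variational formula for measured relative entropy from~\cite{BFT2017}. For states $\tau$ and $\omega\succ0$,
\begin{equation*}
D_M(\tau\Vert\omega)=\sup_{H=H^\dagger}\Big\{\Tr[\tau H]-\log\Tr[\omega e^{H}]\Big\}.
\end{equation*}
Any fixed Hermitian $H$ therefore gives a lower bound on $D_M$. I would insert the witness $H^\star$ of Eq.~\eqref{eq:mqui-gap-witness} into this formula, for an arbitrary reference $\sigma_B$. This gives
\begin{equation*}
D_M(\rho_A\otimes\sigma_B\Vert\rho_{AB})\ge\Tr[(\rho_A\otimes\sigma_B)H^\star]-\Gamma_\rho .
\end{equation*}

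The key computation is that the first term on the right does not depend on $\sigma_B$. Write $X_B$ for the operator of Proposition~\ref{prop:umlaut-closed-form} and $Z=\Tr[e^{X_B}]$, so that $\log\sigma_B^\star=X_B-\log Z$. Then
\begin{equation*}
H^\star=\log\rho_A\otimes I_B+I_A\otimes(X_B-\log Z)-\log\rho_{AB}.
\end{equation*}
Take the trace against $\rho_A\otimes\sigma_B$, using $\Tr[(\rho_A\otimes\sigma_B)\log\rho_{AB}]=\Tr[\sigma_B X_B]$. The $X_B$ terms cancel, leaving
\begin{equation*}
\Tr[(\rho_A\otimes\sigma_B)H^\star]=-S(\rho_A)-\log Z=U(A;B)_\rho .
\end{equation*}
Minimizing over $\sigma_B$ then gives $U^M\ge U-\Gamma_\rho$, that is, $\Delta_\rho\le\Gamma_\rho$.

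The remaining parts of Eq.~\eqref{eq:mqui-state-gap} follow from earlier results. Since $U^M\le U$, we get $\Gamma_\rho\ge\Delta_\rho\ge0$, which also proves that $\Gamma_\rho$ is nonnegative. The bound $\Delta_\rho\le U$ follows from $U^M\ge0$ in Theorem~\ref{thm:mqui-properties}. The bound $\Delta_\rho\le\log|\operatorname{spec}(\rho_{AB})|$ is item~4 of the same theorem.

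For the equality condition I would argue each direction separately.
\begin{itemize}
\item[($\Leftarrow$)] Suppose $[\rho_{AB},\tau^\star_{AB}]=0$. Then $e^{H^\star}=\tau^\star_{AB}\rho_{AB}^{-1}$, so $\Gamma_\rho=\log\Tr[\tau^\star_{AB}]=0$. The gap bound then forces $\Delta_\rho=0$.
\item[($\Rightarrow$)] Suppose $U^M=U$. The chain
\begin{equation*}
U=U^M\le D_M(\tau^\star_{AB}\Vert\rho_{AB})\le D(\tau^\star_{AB}\Vert\rho_{AB})=U
\end{equation*}
forces $D_M(\tau^\star\Vert\rho)=D(\tau^\star\Vert\rho)$, and this quantity is finite because $\rho_{AB}\succ0$.
\end{itemize}

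The main obstacle is the final step of ($\Rightarrow$): deducing commutation from $D_M(\tau\Vert\omega)=D(\tau\Vert\omega)<\infty$. This is a known equality condition from~\cite{BFT2017}, and I would cite it. If a self-contained argument is wanted, I would start from the other characterization of $D_M$ there, as a supremum over projective measurements. Equality with $D$ requires a measurement that attains equality in the data-processing inequality for $D$. Petz recoverability of both states from the measured (classical) outputs then forces $\tau$ and $\omega$ to be simultaneously diagonal, hence to commute. Full rank of $\rho_{AB}$ is what ensures finiteness and lets this equality case apply cleanly.
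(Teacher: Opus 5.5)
Your proposal is correct and follows the paper's proof essentially step for step. It uses the same witness $H^\star$, the same cancellation of the $X_B$ terms giving $\Tr[(\rho_A\otimes\sigma_B)H^\star]=U(A;B)_\rho$ for every $\sigma_B$, the same chain forcing $D_M(\tau^\star_{AB}\Vert\rho_{AB})=D(\tau^\star_{AB}\Vert\rho_{AB})$, and the same appeal to the full-rank equality condition of~\cite{BFT2017}; your logarithmic form of the variational formula simply does the work of the paper's shifted witness $K=H^\star-\Gamma_\rho I_{AB}$. The only real difference is that you get $\Gamma_\rho\ge0$ directly from $0\le\Delta_\rho\le\Gamma_\rho$ (using $U^M\le U$) rather than from the Golden--Thompson inequality, which is valid and slightly more economical.
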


\begin{proof}[Proof sketch]
The Gibbs form of $\sigma_B^\star$ gives $\Tr[(\rho_A\otimes\sigma_B)H^\star]=U(A;B)_\rho$ for every $\sigma_B$. Evaluating the variational expression of measured relative entropy stated in Lemma~\ref{lem:dm-variational} at $H^\star-\Gamma_\rho I_{AB}$ therefore gives the uniform lower bound $D_M(\rho_A\otimes\sigma_B\Vert\rho_{AB})\ge U(A;B)_\rho-\Gamma_\rho$. Minimizing over $\sigma_B$ yields $\Delta_\rho\le\Gamma_\rho$, while nonnegativity and the spectral bound give the other two bounds. The Golden--Thompson inequality ensures $\Gamma_\rho\ge0$. Commutation of $\rho_{AB}$ and $\tau_{AB}^\star$ gives $\Gamma_\rho=0$, and the converse follows from the full-rank equality condition for measured relative entropy. The full proof is given in Appendix~\ref{sec:proof-mqui-state-gap}.
\end{proof}

The equality condition concerns commutation with the optimal product reference and is not restricted to uncorrelated states. For a full-rank state diagonal in a product basis, the umlaut marginal is diagonal in the corresponding basis on $B$, so $U^M=U$ even in the presence of classical correlations. More generally, the state-dependent bound quantifies when $U^M$ is a close approximation to $U$. The quantity $\Gamma_\rho$ uses $\log\rho_{AB}$ and the exact umlaut marginal, so it provides a theoretical bound and a benchmark for simulations with known target states; it is not directly supplied by the measurement statistics used to train the estimator.

The same bound controls small departures from states with zero intrinsic gap. In particular, it gives a quadratic approximation guarantee around correlated equality states without evaluating $U^M$ separately.

\begin{corollary}[Quadratic stability of the intrinsic gap]
\label{cor:mqui-quadratic-stability}
Let $\rho_{AB}(\varepsilon)\succ0$ be a twice continuously differentiable family of density operators for $\varepsilon\in(-\varepsilon_0,\varepsilon_0)$, with $\varepsilon_0>0$, and let $\tau_{AB}^\star(\varepsilon)$ denote its optimal product reference. If $[\rho_{AB}(0),\tau_{AB}^\star(0)]=0$, then there exist $0<\delta<\varepsilon_0$ and $C\ge0$ such that
\begin{equation}\label{eq:mqui-quadratic-gap} 0\le\Delta_{\rho(\varepsilon)}\le C\varepsilon^2\qquad (|\varepsilon|\le\delta). \end{equation}
If $U(A;B)_{\rho(0)}>0$, then, after reducing $\delta$ if necessary,
\begin{equation}\label{eq:mqui-quadratic-relative-gap} \frac{\Delta_{\rho(\varepsilon)}}{U(A;B)_{\rho(\varepsilon)}}\le\frac{2C}{U(A;B)_{\rho(0)}}\varepsilon^2\qquad (|\varepsilon|\le\delta). \end{equation}
\end{corollary}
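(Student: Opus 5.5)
The plan is to apply the bound $\Delta_{\rho}\le\Gamma_\rho$ from Proposition~\ref{prop:mqui-state-gap} along the family and show that $g(\varepsilon):=\Gamma_{\rho(\varepsilon)}$ is $C^2$ near $0$, vanishes at $0$, and is nonnegative. A nonnegative $C^2$ function with a zero at $\varepsilon=0$ has a minimum there, so $g'(0)=0$. Taylor's theorem with the Lagrange remainder then gives $g(\varepsilon)\le \tfrac12\sup_{|s|\le\delta}|g''(s)|\,\varepsilon^2$, which yields Eq.~\eqref{eq:mqui-quadratic-gap} with $C:=\tfrac12\sup_{|s|\le\delta}|g''(s)|$. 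Here $\delta<\varepsilon_0$ is chosen so that $[-\delta,\delta]$ is compact inside the domain. The lower bound $0\le\Delta$ is already part of Proposition~\ref{prop:mqui-state-gap}. At $\varepsilon=0$ the commutation hypothesis gives $g(0)=0$ by the same proposition: when $\rho$ and $\tau^\star$ commute, $e^{H^\star}=\tau^\star\rho^{-1}$, so $\Tr[\rho e^{H^\star}]=\Tr\tau^\star=1$.

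Most of the work is the regularity of $g$. First, $\rho(\varepsilon)\succ0$ is $C^2$, so $\log\rho(\varepsilon)$ is $C^2$, because the matrix logarithm is real-analytic on positive definite matrices. The marginal $\rho_A(\varepsilon)$ is $C^2$ and positive definite. Therefore $X_B(\varepsilon)=\Tr_A[(\rho_A\otimes I)\log\rho_{AB}]$ is $C^2$. By the closed form of Proposition~\ref{prop:umlaut-closed-form}, $\sigma_B^\star(\varepsilon)=e^{X_B}/\Tr e^{X_B}$ is $C^2$ and positive definite. It follows that $\log\tau^\star=\log\rho_A\otimes I+I\otimes(X_B-\log\Tr e^{X_B})$ is $C^2$, and so are $H^\star(\varepsilon)$ and $e^{H^\star(\varepsilon)}$. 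The trace $\Tr[\rho e^{H^\star}]$ is therefore $C^2$ and equals $1$ at $0$. Shrinking $\delta$ keeps it positive, so $g=\log(\cdot)$ is $C^2$ on $[-\delta,\delta]$. Because $g\ge0$ (Golden--Thompson, as in Proposition~\ref{prop:mqui-state-gap}) and $g(0)=0$, the derivative $g'(0)=0$ follows without any explicit computation of first-order perturbation terms. I see no real obstacle; the only delicate point is justifying the differentiability chain through the umlaut marginal, which the closed form handles.

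For the relative statement, $U(\varepsilon)=-S(\rho_A)-\log\Tr e^{X_B}$ is continuous by the same closed form. If $U(0)>0$, reduce $\delta$ so that $U(\varepsilon)\ge U(0)/2$ for $|\varepsilon|\le\delta$. Dividing the quadratic bound by $U(\varepsilon)$ then gives Eq.~\eqref{eq:mqui-quadratic-relative-gap}. The constant $C$ can be kept fixed when $\delta$ shrinks, since the supremum defining it only decreases.
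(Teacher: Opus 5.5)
Your proposal is correct and follows the paper's proof essentially step for step. You bound $\Delta_{\rho(\varepsilon)}$ by $g(\varepsilon)=\Gamma_{\rho(\varepsilon)}$, use $g\ge0$, $g(0)=0$ and $C^2$ regularity to get $g'(0)=0$ and a Taylor bound with $C=\tfrac12\max|g''|$, and then use continuity of the closed form of $U$ to obtain the relative bound. Your explicit regularity chain and your remark that $C$ stays fixed when $\delta$ shrinks spell out what the paper leaves implicit. The extra shrinking of $\delta$ to keep $\Tr[\rho e^{H^\star}]$ positive is unnecessary, since this trace is at least $1$ everywhere by Golden--Thompson.
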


\begin{proof}[Proof sketch]
The function $g(\varepsilon)=\Gamma_{\rho(\varepsilon)}$ is twice continuously differentiable, nonnegative, and satisfies $g(0)=0$. Hence $g'(0)=0$, and Taylor's theorem gives $g(\varepsilon)=O(\varepsilon^2)$. The state-dependent bound transfers this estimate to $\Delta_{\rho(\varepsilon)}$. When $U(A;B)_{\rho(0)}>0$, continuity keeps the denominator bounded away from zero and gives the relative bound. The full proof is given in Appendix~\ref{sec:proof-mqui-quadratic-stability}.
\end{proof}

A correlated starting point is provided by a two-qubit Bell-diagonal state with nonuniform weights, $\rho_{AB}(0)=\sum_{k=1}^{4}p_k|\beta_k\rangle\langle\beta_k|$, where $\{|\beta_k\rangle\}$ is the Bell basis, $p_k>0$, and $\sum_kp_k=1$. Since each Bell state has maximally mixed marginals, the Gibbs formula gives
\begin{equation}\label{eq:mqui-bell-diagonal-reference} X_B=\frac14\left(\sum_{k=1}^{4}\log p_k\right)I_B,\qquad\sigma_B^\star=\frac{I_B}{2},\qquad\tau_{AB}^\star=\frac{I_{AB}}{4}. \end{equation}
Consequently, $\Delta_{\rho(0)}=0$, whereas
\begin{equation}\label{eq:mqui-bell-diagonal-umlaut} U(A;B)_{\rho(0)}=-\log4-\frac14\sum_{k=1}^{4}\log p_k. \end{equation}
This value is positive whenever the weights are not all $1/4$. Corollary~\ref{cor:mqui-quadratic-stability} therefore gives both absolute and relative gap bounds for sufficiently small smooth full-rank perturbations of these states while their quantum umlaut information remains positive. The guarantee is local: the constant $C$ and neighborhood size $\delta$ depend on the starting state and the chosen perturbation family.

For quantum channels, we extend the same measurement-based construction from bipartite states by optimizing the state quantity over the channel input.
\begin{definition}[Measured quantum channel umlaut information]
\label{def:channel-mqui}
The measured quantum umlaut information of a channel $\mathcal N_{A\to B}$ is defined as
\begin{equation}
    U^M(\mathcal N)
    :=\sup_{\psi_{A'A}}U^M(A';B)_{\omega^\psi}
    =\sup_{\psi_{A'A}}\min_{\sigma_B\in\mathcal D(\mathcal H_B)}
    D_M\!\left(\psi_{A'}\otimes\sigma_B\,\middle\|\,\omega_{A'B}^\psi\right),
    \label{eq:def-channel-mqui}
\end{equation}
where the supremum is over pure reference--input states and $\omega_{A'B}^\psi$ is obtained by applying $\mathcal N$ to $A$ while retaining $A'$ as a reference.
\end{definition}

Taking the supremum of the state-level inequality $U^M\le U$ over the channel input and using the relation $U(\mathcal N)\le U^\infty(\mathcal N)$ gives
\begin{equation}
    0\le U^M(\mathcal N)\le U(\mathcal N)\le U^\infty(\mathcal N).
    \label{eq:channel-mqui-bound}
\end{equation}
The first statement of Theorem~\ref{thm:mqui-properties} implies that $U^M(\mathcal N)=0$ if and only if $U(\mathcal N)=0$, and hence if and only if $\mathcal N$ is a replacer channel.

The measured quantum umlaut information of a CQ channel can likewise be obtained by reducing the input optimization to one over probability distributions $p$ on the classical input labels.

\begin{proposition}[Measured CQ channel expression]
\label{prop:cq-mqui}
For a CQ channel $\mathcal N:x\mapsto\rho_x$ and an input distribution $p$, let $\omega_{XB}^p$ be the joint state in which the classical register $X$ records the input label $x$ with probability $p_x$ and $B$ is in the corresponding output state $\rho_x$. Then
\begin{equation}
    U^M(\mathcal N)
    =\sup_{p\in\Delta(\mathcal X)}U^M(X;B)_{\omega^p}
    =\sup_{p\in\Delta(\mathcal X)}\min_{\sigma_B\in\mathcal D(\mathcal H_B)}
    D_M\!\left(\omega_X^p\otimes\sigma_B\,\middle\|\,\omega_{XB}^p\right).
    \label{eq:cq-channel-mqui}
\end{equation}
\end{proposition}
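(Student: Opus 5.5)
We have to show two inequalities: the value of $U^M$ for the channel, taken over pure reference--input states $\psi_{A'A}$, is at least the supremum over $p$, and it is at most that supremum. We take the CQ channel $\mathcal N$ to be the quantum channel $\mathcal N_{A\to B}(\tau)=\sum_x\langle x|\tau|x\rangle\rho_x$ with $\mathcal H_A=\mathrm{span}\{|x\rangle\}$, that is, measure in the computational basis and then prepare $\rho_x$. The argument copies the one sketched before Proposition~\ref{prop:cq-umlaut}. It uses only the data-processing (local monotonicity) property of $U^M$ in item~3 of Theorem~\ref{thm:mqui-properties}, together with invariance under local isometries, which follows from the same property.

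\emph{Lower bound.} Fix $p\in\Delta(\mathcal X)$ and take the purification $|\psi\rangle_{A'A}=\sum_x\sqrt{p_x}|x\rangle_{A'}|x\rangle_A$. Applying $\mathcal N$ to $A$ gives $\omega^\psi_{A'B}=\sum_{x}p_x|x\rangle\!\langle x|_{A'}\otimes\rho_x$, because the cross terms $|x\rangle\!\langle x'|$ with $x\neq x'$ are killed by the measurement. This state is exactly $\omega^p_{XB}$, with $A'$ playing the role of $X$. Hence $U^M(A';B)_{\omega^\psi}=U^M(X;B)_{\omega^p}$, and taking the supremum over $p$ gives $U^M(\mathcal N)\ge\sup_pU^M(X;B)_{\omega^p}$.

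\emph{Upper bound.} Fix an arbitrary pure $\psi_{A'A}$. Write $p_x=\langle x|\psi_A|x\rangle$ and let $\psi^x_{A'}\propto\langle x|_A\psi_{A'A}|x\rangle_A$ be the normalized conditional reference states; these are well defined whenever $p_x>0$. The channel factors through the classical label, so
\[
\omega^\psi_{A'B}=\sum_x p_x\,\psi^x_{A'}\otimes\rho_x=(\mathcal E_{X\to A'}\otimes\mathrm{id}_B)(\omega^p_{XB}),
\]
where $\mathcal E_{X\to A'}$ is the local preparation channel $|x\rangle\!\langle x|\mapsto\psi^x_{A'}$. For labels with $p_x=0$ we send $|x\rangle\!\langle x|$ to any fixed state, and for off-diagonal inputs we first dephase. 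The same channel maps the $X$-marginal correctly: $\mathcal E(\omega^p_X)=\sum_xp_x\psi^x_{A'}=\psi_{A'}$. Item~3 of Theorem~\ref{thm:mqui-properties} then gives $U^M(A';B)_{\omega^\psi}\le U^M(X;B)_{\omega^p}\le\sup_{p}U^M(X;B)_{\omega^p}$, and taking the supremum over $\psi$ completes the equality.

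\emph{Main obstacle.} The step that needs care is the monotonicity claim for a local channel acting on the \emph{first} argument system, which is the system whose marginal is held fixed in the definition. We verify it directly rather than rely on the sketch. For any $\sigma_B$, data processing of $D_M$ under $\mathcal E\otimes\mathrm{id}$ gives
\[
D_M(\omega^p_X\otimes\sigma_B\Vert\omega^p_{XB})\ge D_M\bigl(\mathcal E(\omega^p_X)\otimes\sigma_B\,\Vert\,(\mathcal E\otimes\mathrm{id})(\omega^p_{XB})\bigr)=D_M(\psi_{A'}\otimes\sigma_B\Vert\omega^\psi_{A'B}).
\]
The equality holds because $\mathcal E$ maps the $X$-marginal to the $A'$-marginal. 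Data processing of $D_M$ itself holds because composing a POVM with a channel yields a POVM. Minimizing both sides over $\sigma_B$, with the same $\sigma_B$ admissible on each side, gives the inequality. The minimum is attained, or can be replaced by an infimum, since $D_M$ is lower semicontinuous and $\mathcal D(\mathcal H_B)$ is compact. Infinite values cause no difficulty, because every inequality above holds in $[0,\infty]$.
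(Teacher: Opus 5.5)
Your proposal is correct and follows essentially the same route as the paper. The lower bound uses the purification $\sum_x\sqrt{p_x}\,|x\rangle_{A'}|x\rangle_A$. The upper bound writes $\omega^\psi_{A'B}$ as the image of $\omega^p_{XB}$ under the dephase-and-prepare channel $|x\rangle\!\langle x|\mapsto\psi^x_{A'}$ (the paper's $\Lambda_{X\to A'}$ with $\psi^x_{A'}=|\phi_x\rangle\!\langle\phi_x|$) and then applies local monotonicity. You check that monotonicity step directly from data processing of $D_M$ and the fact that this channel maps the $X$-marginal to $\psi_{A'}$. That detail is a useful addition, since the paper simply cites item~3 of Theorem~\ref{thm:mqui-properties} there.
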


\begin{proof}
Write an arbitrary pure reference--input state as $|\psi\rangle_{A'A}=\sum_x\sqrt{p_x}|\phi_x\rangle_{A'}|x\rangle_A$, where the reference vectors $|\phi_x\rangle$ are normalized and may be chosen arbitrarily when $p_x=0$. Consider the channel that measures $X$ in the label basis and prepares the corresponding reference state:
\begin{equation}\label{eq:cq-mqui-reference-channel} \Lambda_{X\to A'}(T):=\sum_x\langle x|T|x\rangle\,|\phi_x\rangle\langle\phi_x|. \end{equation}
The CQ channel action gives $\omega_{A'B}^{\psi}=(\Lambda_{X\to A'}\otimes\operatorname{id}_B)(\omega_{XB}^p)$. The third statement of Theorem~\ref{thm:mqui-properties} therefore implies
\begin{equation}\label{eq:cq-mqui-local-bound} U^M(A';B)_{\omega^{\psi}}\le U^M(X;B)_{\omega^p}. \end{equation}
Taking the supremum over pure reference--input states gives $U^M(\mathcal N)\le\sup_p U^M(X;B)_{\omega^p}$.

Conversely, for every input distribution $p$, the pure state $|\psi^p\rangle_{A'A}=\sum_x\sqrt{p_x}|x\rangle_{A'}|x\rangle_A$ yields exactly $\omega_{XB}^p$ after applying $\mathcal N$ to $A$, with $A'$ identified with $X$. The input optimization thus gives $U^M(\mathcal N)\ge U^M(X;B)_{\omega^p}$ for every $p$, proving the first equality. The second equality follows from the definition of $U^M$ for bipartite states.
\end{proof}

The minimization uses a common reference state $\sigma_B$ for all input labels, while the distribution $p$ is optimized to obtain the channel quantity. Since $U^\infty(\mathcal N)=U(\mathcal N)$ for CQ channels, the exact measured quantum umlaut information $U^M(\mathcal N)$ is a lower bound on the zero-rate reliability exponent in the assisted communication setting described in Sec.~\ref{subsec:qui-def}. When these quantities are finite, $U(\mathcal N)-U^M(\mathcal N)$ is the intrinsic approximation gap to this exponent.

\subsection{Quantum generative adversarial framework}
\label{subsec:qgan}

The minimization over the reference state and the maximization over measurements can be assigned to two competing learning models. A generative adversarial network (GAN) consists of two models trained against each other: a generator, which produces samples intended to reproduce a target data distribution, and a discriminator, which attempts to distinguish the generated samples from genuine ones~\cite{Goodfellow2014}. Given a generator $G_\theta$ and a discriminator $D_\phi$, the standard GAN is formulated as the minimax problem
\begin{equation}
    \min_{\theta}\max_{\phi} V(\theta,\phi),
\end{equation}
where
\begin{equation}
    V(\theta,\phi)
    =
    \mathbb{E}_{x\sim p_{\mathrm{data}}}
    \!\left[\log D_\phi(x)\right]
    +
    \mathbb{E}_{z\sim p_z}
    \!\left[
        \log\!\bigl(1-D_\phi(G_\theta(z))\bigr)
    \right].
    \label{eq:gan}
\end{equation}
Here, $p_{\mathrm{data}}$ denotes the target distribution, $G_\theta$ maps a random input $z\sim p_z$ to a generated sample, and $D_\phi(x)\in[0,1]$ estimates the probability that $x$ is genuine. The discriminator maximizes the objective by distinguishing genuine and generated samples, whereas the generator minimizes it by making the generated distribution increasingly difficult to distinguish from the target distribution.

Quantum generative adversarial networks (QGANs) extend this adversarial framework to quantum data~\cite{DallaireDemers2018}. In a typical QGAN, the generator is implemented by a parametrized quantum circuit that prepares a quantum state $\sigma_\theta$, while the discriminator implements a quantum measurement designed to distinguish $\sigma_\theta$ from a target state. The two components are trained adversarially: the discriminator seeks measurements that better distinguish the two states, while the generator adjusts its parameters to reduce their distinguishability. In the idealized setting with unrestricted state preparation and measurements, the generated state coincides with the target state at equilibrium, where the discriminator cannot distinguish them. Information measures can also guide quantum generative training. For example, InfoQGAN uses a classical neural estimator to maximize the mutual information between latent codes and generated classical samples~\cite{Lee2025InfoQGAN}.

The connection to our estimation problem comes from variational representations of divergences. In the classical setting, $f$-GANs use these representations to construct adversarial objectives~\cite{Nowozin2016}. Their KL formulation uses real-valued discriminator scores and is the classical counterpart of our objective for measured relative entropy.

Here, the generator minimizes the objective over $\sigma_B$, while the discriminator maximizes it by distinguishing $\rho_A\otimes\sigma_B$ from $\rho_{AB}$. The unrestricted game value is $U^M(A;B)_\rho$. Section~\ref{sec:gan} develops this objective for bipartite states and CQ channels.

\section{Estimation of quantum umlaut information}
\label{sec:gan}

In this section, we develop a variational framework for estimating the measured quantum umlaut information of bipartite states, $U^M(A;B)_\rho$, and CQ channels, $U^M(\mathcal N)$, from quantum measurements. We focus on targets for which $U$ is finite, so that $U^M$ is also finite. These include full-rank bipartite states and CQ channels with full-rank output states. Our approach combines a variational characterization of the measured relative entropy with the quantum generative adversarial framework introduced in Sec.~\ref{subsec:qgan}. We first derive an adversarial min--sup formulation for bipartite states and the corresponding variational representation for CQ channels, which includes an outer optimization over the classical input distribution. We then show how the optimization variables are parametrized and how the resulting objectives are estimated from measurement outcomes and optimized through alternating updates of the variational parameters.

\subsection{Variational formulation}
\label{subsec:variational-formulation}

We begin by rewriting the measured relative entropy in a form suitable for variational estimation. Although $D_M$ is defined through an optimization over quantum measurements, it admits an equivalent characterization as an optimization over Hermitian operators, which serves as the basis of our adversarial formulation.

\begin{lemma}[Variational expression of measured relative entropy~\cite{BFT2017}]
\label{lem:dm-variational}
For quantum states $\tau$ and $\omega$,
\begin{equation}
    D_M(\tau\Vert\omega)
    =
    \sup_{H=H^\dagger}
    \left\{
        \Tr(\tau H)
        -
        \Tr(\omega e^H)
        +
        1
    \right\},
    \label{eq:dm-variational}
\end{equation}
where the supremum is taken over all Hermitian operators $H$ acting on the underlying Hilbert space.
\end{lemma}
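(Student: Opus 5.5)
We prove the two inequalities separately and then show they meet.

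\emph{Lower bound.} Fix a Hermitian $H$ with spectral decomposition $H=\sum_y h_y P_y$. Use the projective measurement $\mathcal M=\{P_y\}_y$, with $p(y)=\Tr(\tau P_y)$ and $q(y)=\Tr(\omega P_y)$. Then
\[
\Tr(\tau H)-\Tr(\omega e^H)+1=\sum_y p(y)h_y-\sum_y q(y)e^{h_y}+1 .
\]
The elementary inequality $a\log(a/b)\ge a h - b e^{h}+a$ holds for $a,b\ge0$ and real $h$. It follows from $e^{h}\ge 1+h+\log(a/b)-\ldots$, or more directly from the fact that the concave function $h\mapsto ah-be^h$ attains its maximum $a\log(a/b)-a$ at $e^h=a/b$. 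Summing over $y$ gives
\[
\sum_y p(y)h_y-\sum_y q(y)e^{h_y}+1\le D_{\mathrm{KL}}(p\Vert q)\le D_M(\tau\Vert\omega).
\]
Outcomes with $q(y)=0<p(y)$ make $D_{\mathrm{KL}}=\infty$, so they cause no problem. Taking the supremum over $H$ shows that the right-hand side of the claim is at most $D_M$.

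\emph{Upper bound.} Fix a POVM $\{M_y\}$ with outcome distributions $p,q$, and first assume $D_{\mathrm{KL}}(p\Vert q)<\infty$. Restrict to outcomes with $p(y)>0$, and set $h_y=\log(p(y)/q(y))$, with a very negative value on outcomes where $p(y)=0$. The natural candidate $H=\sum_y h_y M_y$ is not directly usable, because for a non-projective POVM $e^H\neq\sum_y e^{h_y}M_y$.

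This step is the main obstacle. I would handle it with operator concavity of the logarithm (or, equivalently, convexity of $\exp$ in trace form). Define $L=\sum_y \log(p(y)/q(y)+\epsilon)\,M_y$ and take $H=L$. By operator Jensen's inequality applied through the unital channel $\Phi(X)=\sum_y \Tr(\ldots)$, we have $\Tr(\omega e^{\sum_y h_yM_y})\le\sum_y e^{h_y}\Tr(\omega M_y)$. The cleaner route is a Naimark dilation: a POVM is a projective measurement on $\tau\otimes|0\rangle\langle0|$, $\omega\otimes|0\rangle\langle0|$ after an isometry $V$. We therefore need $\sup_H$ over the smaller space to dominate the dilated one. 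The fact used is $\Tr(\omega\,V^\dagger e^{K}V)\ge \Tr(\omega e^{V^\dagger K V})$ for an isometry $V$. This is Jensen's trace inequality for the convex function $\exp$ (Hansen--Pedersen), which holds in trace form for convex functions and compressions.

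With $K=\sum_y h_y\Pi_y$ and $H=V^\dagger K V=\sum_y h_y M_y$, we obtain
\[
\Tr(\tau H)-\Tr(\omega e^{H})+1\ge \sum_y p(y)h_y-\sum_y q(y)e^{h_y}+1=D_{\mathrm{KL}}(p\Vert q)-1+1 .
\]
Here the sums run over the support, and a small $\epsilon$ regularization with $\epsilon\to0$ handles outcomes with $p(y)=0$.

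If $D_{\mathrm{KL}}(p\Vert q)=\infty$, there is an outcome $y$ with $p(y)>0=q(y)$. Take $h_y=t$ on that outcome and $0$ elsewhere. The same Jensen step gives a value of at least $tp(y)-\text{const}$, which tends to $\infty$ as $t\to\infty$.

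\emph{Conclusion.} Taking the supremum over POVMs gives $D_M\le\sup_H$. Combined with the lower bound, this establishes the equality.
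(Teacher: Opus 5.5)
Your first half, the bound $\sup_H\{\cdots\}\le D_M$, is correct: you use the spectral projections of $H$ and the scalar bound $ah-be^h\le a\log(a/b)-a$. The gap is in the converse direction, at the inequality $\Tr(\omega\,e^{\sum_y h_yM_y})\le\sum_y e^{h_y}\Tr(\omega M_y)$, equivalently $\Tr(\omega\,e^{V^\dagger KV})\le\Tr(\omega\,V^\dagger e^{K}V)$.

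\begin{itemize}
\item Jensen's trace inequality of Hansen and Pedersen only gives the unweighted statement $\Tr f(V^\dagger KV)\le\Tr V^\dagger f(K)V$ for convex $f$, i.e.\ the case $\omega\propto I$.
\item For arbitrary states $\omega$, your inequality is the Loewner-order statement $e^{V^\dagger KV}\le V^\dagger e^{K}V$. That would need operator convexity of $\exp$, which fails. Take $K=A\oplus B$ and the isometry $V:v\mapsto(v\oplus v)/\sqrt2$. Then $V^\dagger KV=\tfrac12(A+B)$ and $V^\dagger e^{K}V=\tfrac12(e^A+e^B)$, and there are $2\times2$ Hermitian $A,B$ with $e^{(A+B)/2}\not\le\tfrac12(e^A+e^B)$.
\item Writing $K=\sum_y h_y\Pi_y$, Naimark turns such a pair $(K,V)$ into a POVM $M_y=V^\dagger\Pi_yV$ with $V^\dagger KV=\sum_yh_yM_y$ and $V^\dagger e^KV=\sum_ye^{h_y}M_y$. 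So the step fails for suitable POVMs and pure $\omega$, and the choice $H=\sum_yh_yM_y$ is not justified.
\item Your remark that this is equivalent to operator concavity of the logarithm is also wrong. Concavity gives $\log(\sum_ye^{h_y}M_y)\ge\sum_yh_yM_y$, but you cannot exponentiate both sides, because $\exp$ is not operator monotone.
\end{itemize}

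The repair is to put the logarithm on the other side. Take $H:=\log\bigl(\sum_ye^{h_y}M_y\bigr)$, which is well defined since $\sum_ye^{h_y}M_y\ge\min_ye^{h_y}\,I>0$.

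\begin{itemize}
\item The $\omega$-term is then exact: $\Tr(\omega e^{H})=\sum_ye^{h_y}q(y)$.
\item The operator Jensen inequality for the operator concave $\log$, with Kraus operators $M_y^{1/2}$ and $\sum_yM_y=I$, gives $H\ge\sum_yh_yM_y$. Hence $\Tr(\tau H)\ge\sum_yh_yp(y)$.
\item Set $h_y=\log(p(y)/q(y))$ where $p(y)>0$ (these outcomes have $q(y)>0$ in the finite case) and $h_y=\log\epsilon$ elsewhere. The objective is then at least $D_{\mathrm{KL}}(p\Vert q)-\epsilon$, and no dilation is needed.
\end{itemize}

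This is essentially the standard argument of Berta, Fawzi, and Tomamichel, which the paper cites rather than reproduces; there it is written with $e^H=\sum_y\omega_yM_y$. Your infinite case needs no Jensen step at all. $\Tr(\omega M_{y_0})=0$ forces $M_{y_0}\omega=0$, so $\Tr(\omega e^{tM_{y_0}})=1$ and the objective equals $t\,p(y_0)\to\infty$.
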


Substituting the variational representation of Lemma~\ref{lem:dm-variational} into the definition of $U^M(A;B)_\rho$, with $\tau=\rho_A\otimes\sigma_B$ and $\omega=\rho_{AB}$, gives
\begin{align}
    U^M(A;B)_\rho
    &=
    \min_{\sigma_B\in\mathcal{D}(\mathcal{H}_B)}
    \sup_{H=H^\dagger}
    \left\{
        \Tr\!\left[(\rho_A\otimes\sigma_B)H\right]
        -
        \Tr\!\left[\rho_{AB}e^H\right]
        +
        1
    \right\}.
    \label{eq:mqui-minsup}
\end{align}
Thus, the measured quantum umlaut information of a bipartite state can be expressed as an explicit min--sup optimization over a quantum state $\sigma_B$ and a Hermitian operator $H$. For convenience, we define
\begin{equation}
    \mathcal{L}(\sigma_B,H)
    :=
    \Tr\!\left[(\rho_A\otimes\sigma_B)H\right]
    -
    \Tr\!\left[\rho_{AB}e^H\right]
    +
    1.
    \label{eq:mqui-objective}
\end{equation}
Once $\sigma_B$ and $H$ are suitably parametrized, the two trace terms in this objective can be evaluated from quantum measurement statistics, making the min--sup formulation amenable to variational estimation.

This min--sup formulation admits a natural adversarial interpretation. The state $\sigma_B$ plays the role of the generator variable: it specifies a product state $\rho_A\otimes\sigma_B$ and is optimized to minimize the objective. In contrast, the Hermitian operator $H$ plays the role of the discriminator variable and is optimized to maximize the distinguishability between $\rho_A\otimes\sigma_B$ and the target state $\rho_{AB}$ captured by the variational objective.

This game differs from the standard QGAN setting in an important respect. The generator is restricted to the family of product states $\rho_A\otimes\sigma_B$ and therefore cannot, in general, reproduce a correlated target state $\rho_{AB}$. Consequently, the quantity of interest is not the generated state itself, but the optimal value of the adversarial game, which is precisely $U^M(A;B)_\rho$. The optimizing state $\sigma_B$ is instead a minimizer in the definition of $U^M$.

The generator optimization is essential in this setting. Unlike variational formulations of the quantum mutual information, where the optimal product reference is determined by the marginal state~\cite{GirardiQuantum2025}, the minimizer of the reverse relative-entropy problem is generally nontrivial. For the quantum umlaut information of a full-rank state $\rho_{AB}$, Proposition~\ref{prop:umlaut-closed-form} identifies this optimizer as the umlaut marginal, which depends on the global state $\rho_{AB}$. For the measured quantum umlaut information, no analogous closed-form minimizer is available in general. Consequently, both the generator and the discriminator must be optimized.

The variational formulation for quantum channels adds an outer optimization over the channel input to the min--sup problem for bipartite states. For channel estimation, we focus on CQ channels $\mathcal N:x\mapsto\rho_x$, for which the outer optimization is over classical input distributions. Combining Proposition~\ref{prop:cq-mqui} with Lemma~\ref{lem:dm-variational} gives
\begin{equation}
    U^M(\mathcal N)
    =
    \sup_{p\in\Delta(\mathcal X)}
    \min_{\sigma_B\in\mathcal D(\mathcal H_B)}
    \sup_{H=H^\dagger}
    \left\{
        \Tr\!\left[(\omega_X^p\otimes\sigma_B)H\right]
        -\Tr\!\left[\omega_{XB}^p e^H\right]
        +1
    \right\}.
    \label{eq:cq-channel-mqui-variational}
\end{equation}
Here, $H$ acts on $XB$, and the distribution $p$ enters both the product reference state $\omega_X^p\otimes\sigma_B$ and the labelled output state $\omega_{XB}^p$. For each fixed $p$, the inner min--sup problem gives $U^M(X;B)_{\omega^p}$, with a common generator state $\sigma_B$ for all input labels.

To turn these abstract optimization problems into trainable quantum estimators, we next describe how to parametrize the generator state $\sigma_B$ and the discriminator operator $H$, together with the classical input distribution $p$ for CQ channels.

\subsection{Methodology}
\label{subsec:implementation}

The generator prepares a quantum state $\sigma_B(\theta)$ on $B$ via a
parametrized purification on $BR$:
\begin{equation}
    \ket{\Psi(\theta)}_{BR}
    =
    U_{BR}(\theta)\ket{0}^{\otimes(n_B+n_R)},
    \qquad
    \sigma_B(\theta)
    =
    \Tr_R\!\left[
        \ket{\Psi(\theta)}\!\bra{\Psi(\theta)}
    \right],
    \label{eq:generator}
\end{equation}
where $R$ is an ancillary register, $U_{BR}(\theta)$ is a parametrized
unitary circuit, and $n_B$ and $n_R$ denote the numbers of qubits in $B$
and $R$, respectively. The register $R$ is left unmeasured in the sampling
circuits shown in Fig.~\ref{fig:framework}, and $n_R$ is treated as a
tunable hyperparameter controlling the maximum rank of $\sigma_B(\theta)$.
In particular,
\begin{equation}
    \operatorname{rank}\sigma_B(\theta)
    \le \min\{2^{n_B},2^{n_R}\}.
    \label{eq:generator-rank}
\end{equation}
Representing arbitrary states on $B$ therefore requires $n_R\ge n_B$
and a sufficiently expressive preparation circuit; a smaller register or
an insufficiently expressive circuit restricts the generator family.

For the discriminator, we append an $n_D$-qubit ancillary register $D$
initialized in $\ket{0}_D := \ket{0}^{\otimes n_D}$, where $n_D$ is treated
as a tunable hyperparameter. The register $D$ provides additional degrees of freedom for implementing measurements with a restricted discriminator circuit. Adjoining the same fixed ancilla to both
arguments leaves the measured relative entropy invariant by data processing.
We therefore consider the extended states
\begin{equation}
    \widetilde{\tau}_{\theta}
    :=
    \left(\rho_A\otimes\sigma_B(\theta)\right)
    \otimes\ket{0}\!\bra{0}_D,
    \qquad
    \widetilde{\rho}
    :=
    \rho_{AB}\otimes\ket{0}\!\bra{0}_D.
    \label{eq:extended-states}
\end{equation}

We parametrize the discriminator observable $H(h,\phi)$ through its eigenbasis and eigenvalues~\cite{Goldfeld2024}:
\begin{equation}
    H(h,\phi)
    =
    \sum_{i=0}^{d-1}
    h_i\,V(\phi)\ket{i}\!\bra{i}V^\dagger(\phi),
    \qquad
    d = 2^{n_A+n_B+n_D},
    \label{eq:discriminator}
\end{equation}
where $V(\phi)$ is a parametrized unitary circuit on $ABD$ that determines the eigenbasis, $n_A$ denotes the number of qubits in $A$, and $h = (h_0,\ldots,h_{d-1}) \in \mathbb{R}^{d}$ contains the trainable eigenvalues. 

Since exponentiation preserves the eigenbasis of $H(h,\phi)$ and replaces
each eigenvalue $h_i$ by $e^{h_i}$, both terms of the variational objective
can be evaluated from computational-basis measurement statistics.
Specifically, applying $V^\dagger(\phi)$ and measuring all qubits in $ABD$
yields
\begin{equation}
    p_{\theta,\phi}(i) := \bra{i}V^\dagger(\phi)\,\widetilde{\tau}_{\theta}\,V(\phi)\ket{i},
    \qquad
    q_{\phi}(i) := \bra{i}V^\dagger(\phi)\,\widetilde{\rho}\,V(\phi)\ket{i},
    \qquad
    i = 0,\ldots,d-1.
    \label{eq:pq}
\end{equation}
The same discriminator circuit and measurement are used in both cases,
with only the input state changed. In terms of these distributions, the
variational loss becomes
\begin{equation}
    \mathcal{L}(\theta,h,\phi)
    =
    \sum_{i=0}^{d-1}h_i\,p_{\theta,\phi}(i)
    -
    \sum_{i=0}^{d-1}e^{h_i}\,q_{\phi}(i)
    +1.
    \label{eq:loss}
\end{equation}

We apply the preceding construction to a CQ channel $\mathcal N:x\mapsto\rho_x$ at a fixed input distribution $p\in\Delta(\mathcal X)$, replacing $A$, $\rho_{AB}$, and $\rho_A$ by $X$, $\omega_{XB}^p$, and $\omega_X^p$, respectively. The generator and discriminator retain the structures used for state estimation. The generator prepares a common state $\sigma_B(\theta)$ for all input labels via a parametrized purification on $BR$, while the discriminator uses the spectral parametrization of $H(h,\phi)$, with $V(\phi)$ acting on $XBD$. The ancillary registers $R$ and $D$ are used as above, with their sizes $n_R$ and $n_D$ remaining tunable hyperparameters.

The corresponding measurement-outcome distributions are
\begin{equation}
    \begin{aligned}
        P_{\theta,\phi}^{p}(i)
        &:=
        \bra{i}V^\dagger(\phi)
        \bigl[
            (\omega_X^p\otimes\sigma_B(\theta))
            \otimes\ket{0}\!\bra{0}_D
        \bigr]
        V(\phi)\ket{i},\\
        Q_{\phi}^{p}(i)
        &:=
        \bra{i}V^\dagger(\phi)
        \bigl[
            \omega_{XB}^{p}\otimes\ket{0}\!\bra{0}_D
        \bigr]
        V(\phi)\ket{i},
        \qquad i=0,\ldots,d_{\mathrm{CQ}}-1,
    \end{aligned}
    \label{eq:cq-measurement-distributions}
\end{equation}
where $i$ runs over the computational-basis outcomes on $XBD$, whose dimension is $d_{\mathrm{CQ}}=2^{n_X+n_B+n_D}$. Here, $n_X$ denotes the number of qubits encoding the classical label register $X$. Substituting these distributions into Eq.~\eqref{eq:loss} gives the CQ loss for fixed $p$
\begin{equation}
    \mathcal L_{\mathrm{CQ}}(p;\theta,h,\phi)
    =
    \sum_i h_iP_{\theta,\phi}^{p}(i)
    -
    \sum_i e^{h_i}Q_{\phi}^{p}(i)
    +1.
    \label{eq:cq-variational-loss}
\end{equation}

For restricted discriminator circuits, Lemma~\ref{lem:dm-variational}
and the ancilla invariance  give

\begin{equation}
    \begin{aligned}
        \sup_{h,\phi}\mathcal{L}(\theta,h,\phi)
        &\le D_M(\rho_A\otimes\sigma_B(\theta)\Vert\rho_{AB}),\\
        \sup_{h,\phi}\mathcal L_{\mathrm{CQ}}(p;\theta,h,\phi)
        &\le D_M(\omega_X^p\otimes\sigma_B(\theta)\Vert\omega_{XB}^p).
    \end{aligned}
    \label{eq:restricted-discriminator}
\end{equation}

These bounds hold at the current generated state $\sigma_B(\theta)$, with $p$ held fixed in the CQ case. Equality is recovered in each case when the discriminator can represent arbitrary Hermitian operators.  Circuit restrictions and incomplete optimization do not guarantee that the learned losses provide lower or upper bounds on the corresponding information quantities $U^M(A;B)_\rho$ and $U^M(\mathcal N)$.

\begin{figure}[!htbp]
    \definecolor{figGenerator}{HTML}{004488}
    \definecolor{figDiscriminator}{HTML}{BB5566}
    \tikzset{
        figone wire/.style={line width=0.45pt},
        figone block/.style={draw, line width=0.55pt, fill=white, inner sep=3pt},
        figone generator/.style={figone block,draw=figGenerator,fill=figGenerator!22},
        figone discriminator/.style={figone block,draw=figDiscriminator,fill=figDiscriminator!22},
        pics/figone discard/.style={code={
            \draw[figone wire] (0,-0.13) -- (0,0.13);
            \draw[figone wire] (0.08,-0.09) -- (0.08,0.09);
            \draw[figone wire] (0.16,-0.045) -- (0.16,0.045);
        }},
        pics/figone readout/.style={code={
            \draw[line width=0.45pt,fill=white] (-0.25,-0.2) rectangle (0.25,0.2);
            \draw[line width=0.4pt] (-0.16,-0.1) arc[start angle=180,end angle=0,radius=0.16];
            \draw[line width=0.4pt,-{Latex[length=1mm,width=0.7mm]}] (0,-0.1) -- (0.12,0.12);
        }}
    }
    \centering
    \begin{minipage}[t]{0.60\linewidth}
    \centering
    \begin{tikzpicture}[x=1cm,y=1cm,font=\normalsize]
        \path[use as bounding box] (-0.65,-2.9) rectangle (6.6,0.4);

        \draw[figone wire,decorate,decoration={brace,amplitude=3pt,mirror}] (0.35,0.12) -- (0.35,-0.72);
        \node[anchor=east] at (0.16,-0.3) {$\rho_{AB}$};
        \node[anchor=east] (inputB) at (0.55,-1.2) {$\ket{0}^{\otimes n_B}$};
        \node[anchor=east] (inputR) at (0.55,-1.8) {$\ket{0}^{\otimes n_R}$};
        \node[anchor=east] (inputD) at (0.55,-2.4) {$\ket{0}_D$};
        \node[figone generator,minimum width=1.4cm,minimum height=1.1cm]
            (generator) at (2.2,-1.5) {$U_{BR}(\theta)$};
        \node[figone discriminator,minimum width=1cm,minimum height=2.95cm]
            (discP) at (4.7,-1.2) {$V^\dagger(\phi)$};

        \draw[figone wire] (0.55,0) -- (discP.west |- 0,0);
        \draw[figone wire] (0.55,-0.6) -- (1.2,-0.6);
        \pic at (1.2,-0.6) {figone discard};
        \draw[figone wire] (inputB.east) -- (generator.west |- inputB);
        \draw[figone wire] (inputR.east) -- (generator.west |- inputR);
        \draw[figone wire] (generator.east |- inputB) -- (discP.west |- inputB);
        \draw[figone wire] (generator.east |- inputR) -- (3.55,-1.8);
        \pic at (3.55,-1.8) {figone discard};
        \draw[figone wire] (inputD.east) -- (discP.west |- inputD);

        \foreach \level in {0,-1.2,-2.4}{
            \draw[figone wire] (discP.east |- 0,\level) -- (5.75,\level);
            \pic at (6,\level) {figone readout};
        }
    \end{tikzpicture}
    \par\smallskip
    (a) sampling $p_{\theta,\phi}$
    \end{minipage}\hfill
    \begin{minipage}[t]{0.36\linewidth}
    \centering
    \begin{tikzpicture}[x=1cm,y=1cm,font=\normalsize]
        \path[use as bounding box] (-0.65,-2.9) rectangle (3.8,0.4);

        \draw[figone wire,decorate,decoration={brace,amplitude=3pt,mirror}] (0.35,0.12) -- (0.35,-1.32);
        \node[anchor=east] at (0.16,-0.6) {$\rho_{AB}$};
        \node[anchor=east] (inputD) at (0.55,-2.4) {$\ket{0}_D$};
        \node[figone discriminator,minimum width=1cm,minimum height=2.95cm]
            (discQ) at (2,-1.2) {$V^\dagger(\phi)$};

        \foreach \level in {0,-1.2,-2.4}{
            \draw[figone wire] (0.55,\level) -- (discQ.west |- 0,\level);
            \draw[figone wire] (discQ.east |- 0,\level) -- (3.05,\level);
            \pic at (3.3,\level) {figone readout};
        }
    \end{tikzpicture}
    \par\smallskip
    (b) sampling $q_\phi$
    \end{minipage}
    \caption{Sampling circuits for bipartite-state estimation. Each wire represents a register of qubits. (a) Sampling $p_{\theta,\phi}$ from $\rho_A\otimes\sigma_B(\theta)$. The original $B$ and the purification register $R$ are discarded. (b) Sampling $q_\phi$ from $\rho_{AB}$. Both panels apply the same $V^\dagger(\phi)$ followed by computational-basis measurements, with the measured registers ordered as $A,B,D$ from top to bottom.}
    \label{fig:framework}
\end{figure}

The distributions in Eq.~\eqref{eq:pq} are obtained from independent copies of the unknown state $\rho_{AB}$ using the two sampling circuits shown in Fig.~\ref{fig:framework}. To obtain $p_{\theta,\phi}$, we discard the original subsystem $B$ of a fresh copy of $\rho_{AB}$, leaving its marginal $\rho_A$, and generate $\sigma_B(\theta)$ by applying $U_{BR}(\theta)$ to fresh registers $B$ and $R$. After appending the register $D$, we apply $V^\dagger(\phi)$ to $ABD$ and measure all its qubits in the computational basis. To obtain $q_\phi$, we append $D$ to another copy of $\rho_{AB}$ and apply the same discriminator circuit and measurement. Thus, no tomographic reconstruction of either $\rho_{AB}$ or $\rho_A$ is required.

\begin{figure}[!htbp]
\definecolor{figGenerator}{HTML}{004488}
\definecolor{figDiscriminator}{HTML}{BB5566}
\definecolor{figChannel}{HTML}{DDAA33}
\tikzset{
    every node/.style={font=\fontsize{9}{11}\selectfont},
    block/.style={draw, line width=0.55pt, fill=white,
        align=center, inner sep=3pt},
    generator/.style={block,draw=figGenerator,fill=figGenerator!22},
    preparation/.style={block,draw=black!65,fill=black!3,minimum width=1.4cm,minimum height=0.55cm},
    sampler/.style={block,draw=black!65,fill=black!3,minimum width=1.3cm,minimum height=0.55cm},
    discriminator/.style={block,draw=figDiscriminator,fill=figDiscriminator!22,minimum width=1cm,minimum height=1.9cm},
    channel/.style={preparation,draw=figChannel,fill=figChannel!28},
    quantum/.style={line width=0.45pt},
    classical/.style={dashed, line width=0.45pt, -{Latex[length=1.4mm]}},
    pics/figtwo discard/.style={code={
        \draw[quantum] (0,-0.13) -- (0,0.13);
        \draw[quantum] (0.08,-0.09) -- (0.08,0.09);
        \draw[quantum] (0.16,-0.045) -- (0.16,0.045);
    }},
    pics/readout/.style={code={
        \draw[line width=0.45pt,fill=white] (-0.25,-0.2) rectangle (0.25,0.2);
        \draw[line width=0.4pt] (-0.16,-0.1)
            arc[start angle=180,end angle=0,radius=0.16];
        \draw[line width=0.4pt,-{Latex[length=1mm,width=0.7mm]}]
            (0,-0.1) -- (0.12,0.12);
    }}
}

\centering
\begin{minipage}[t]{0.48\linewidth}
\centering
\begin{tikzpicture}[x=1cm,y=1cm]
    \path[use as bounding box] (0,-1.85) rectangle (7.2,1.1);

    \node (inputR) at (0.75,0.7) {$\ket{0}^{\otimes n_R}$};
    \node (inputB) at (0.75,0) {$\ket{0}^{\otimes n_B}$};
    \node[generator,minimum width=1.4cm,minimum height=1.25cm]
        (generator) at (2.6,0.35) {$U_{BR}(\theta)$};
    \node[sampler] (sampleP) at (0.75,-1.4) {$x\sim p(\eta)$};
    \node[preparation] (recordP) at (2.6,-1.4) {$\ket{x}_X$};
    \node[discriminator] (discP) at (4.8,-0.7) {$V^\dagger(\phi)$};
    \node[anchor=east] (ancillaP) at (3.85,-0.7) {$\ket{0}_D$};

    \draw[quantum] (inputR.east) -- (generator.west |- inputR);
    \draw[quantum] (inputB.east) -- (generator.west |- inputB);
    \draw[quantum] (generator.east |- inputR) -- (3.8,0.7);
    \pic at (3.8,0.7) {figtwo discard};
    \draw[quantum] (generator.east |- inputB) -- node[above] {$\sigma_B(\theta)$} (discP.west |- inputB);
    \draw[classical] (sampleP.east) -- (recordP.west);
    \draw[quantum] (recordP.east) -- (discP.west |- recordP);
    \draw[quantum] (ancillaP.east) -- (discP.west |- ancillaP);
    \foreach \level in {0,-0.7,-1.4}{
        \draw[quantum] (discP.east |- 0,\level) -- (5.85,\level);
        \pic at (6.1,\level) {readout};
        \draw[classical] (6.35,\level) -- (6.85,\level);
    }
\end{tikzpicture}
\par\smallskip
{\small (a) sampling $P_{\theta,\phi}^{p}$}
\end{minipage}\hfill
\begin{minipage}[t]{0.48\linewidth}
\centering
\begin{tikzpicture}[x=1cm,y=1cm]
    \path[use as bounding box] (0,-1.85) rectangle (7.2,1.1);

    \node[sampler] (sampleQ) at (0.75,-0.7) {$x\sim p(\eta)$};
    \node[preparation] (recordQ) at (2.6,-1.4) {$\ket{x}_X$};
    \node[channel] (channel) at (2.6,0) {$\mathcal N$};
    \node[discriminator] (discQ) at (4.8,-0.7) {$V^\dagger(\phi)$};
    \node[anchor=east] (ancillaQ) at (3.85,-0.7) {$\ket{0}_D$};

    \coordinate (labelQ) at (1.65,-0.7);
    \draw[dashed,line width=0.45pt] (sampleQ.east) -- (labelQ);
    \fill (labelQ) circle[radius=1.1pt];
    \draw[classical] (labelQ) |- (recordQ.west);
    \draw[classical] (labelQ) |- (channel.west);
    \draw[quantum] (recordQ.east) -- (discQ.west |- recordQ);
    \draw[quantum] (channel.east) -- node[above] {$\rho_x$} (discQ.west |- channel);
    \draw[quantum] (ancillaQ.east) -- (discQ.west |- ancillaQ);
    \foreach \level in {0,-0.7,-1.4}{
        \draw[quantum] (discQ.east |- 0,\level) -- (5.85,\level);
        \pic at (6.1,\level) {readout};
        \draw[classical] (6.35,\level) -- (6.85,\level);
    }
\end{tikzpicture}
\par\smallskip
{\small (b) sampling $Q_\phi^{p}$}
\end{minipage}
\caption{Sampling circuits of the estimation protocol for CQ channels at fixed input distribution $p=p(\eta)$.
Solid lines represent quantum registers; dashed lines carry classical data.
(a) Sampling $P_{\theta,\phi}^{p}$ from $\omega_X^p\otimes\sigma_B(\theta)$,
with $R$ discarded and no channel use.
(b) Sampling $Q_\phi^p$ from $\omega_{XB}^p$,
with the sampled input label recorded in $X$.
Both circuits apply the same $V^\dagger(\phi)$ followed by computational-basis measurements.}
\label{fig:cq-sampling}
\end{figure}

The CQ sampling procedure, shown in Fig.~\ref{fig:cq-sampling}, follows the state preparation and measurement protocol above. However, for state estimation, the device supplies independently prepared copies of $\rho_{AB}$, whereas the CQ channel $\mathcal N$ produces an output $\rho_x$ in response to a chosen input label $x$. At fixed $p$, we prepare the product reference $\omega_X^p\otimes\sigma_B(\theta)$ by sampling an input label $x$ from $p$ and preparing $\ket{x}\!\bra{x}_X\otimes\sigma_B(\theta)$, which requires no channel use. We prepare the labelled output state $\omega_{XB}^p$ by drawing $x\sim p$, supplying it to the channel, and retaining a classical record of $x$ in $X$ together with the output $\rho_x$. Applying the same discriminator measurement to both preparations yields samples from $P_{\theta,\phi}^p$ and $Q_\phi^p$ in Eq.~\eqref{eq:cq-measurement-distributions}, respectively, for estimating $\mathcal L_{\mathrm{CQ}}$.

To optimize the channel input, we parametrize its distribution as $p=p(\eta)\in\Delta(\mathcal X)$, where $\eta$ is a vector of classical parameters. We use the shorthand $\mathcal L_{\mathrm{CQ}}(\eta,\theta,h,\phi):=\mathcal L_{\mathrm{CQ}}(p(\eta);\theta,h,\phi)$. For the chosen variational families, the resulting optimization problem is
\begin{equation}
    \sup_{\eta}\inf_{\theta}\sup_{h,\phi}
    \mathcal L_{\mathrm{CQ}}\left(\eta,\theta,h,\phi\right).
    \label{eq:cq-parametrized-optimization}
\end{equation}

With $N_p$ and $N_q$ shots for the two sampling circuits, respectively, we estimate the loss for a bipartite state as
\begin{equation}
    \widehat{\mathcal{L}}(\theta,h,\phi)
    =
    \frac{1}{N_p}\sum_{m=1}^{N_p}h_{i_m}
    -
    \frac{1}{N_q}\sum_{m=1}^{N_q}e^{h_{j_m}}
    +1,
    \qquad
    i_m\sim p_{\theta,\phi},
    \quad
    j_m\sim q_\phi.
    \label{eq:empirical}
\end{equation}

For a CQ channel, we estimate the loss as
\begin{equation}
    \widehat{\mathcal L}_{\mathrm{CQ}}(\eta,\theta,h,\phi)
    =
    \frac{1}{N_p}\sum_{m=1}^{N_p}h_{i_m}
    -
    \frac{1}{N_q}\sum_{m=1}^{N_q}e^{h_{j_m}}
    +1,
    \qquad
    i_m\sim P_{\theta,\phi}^{p(\eta)},
    \quad
    j_m\sim Q_\phi^{p(\eta)}.
    \label{eq:cq-empirical}
\end{equation}

For fixed parameters, $\widehat{\mathcal{L}}$ and $\widehat{\mathcal L}_{\mathrm{CQ}}$ are  unbiased estimators of $\mathcal{L}$ and $\mathcal L_{\mathrm{CQ}}$, respectively.

For independent shots in the two circuits, the finite-sample uncertainty at fixed parameters is quantified by

\begin{equation}
    \operatorname{Var}\!\left(\widehat{\mathcal L}\right)
    =
    \frac{\operatorname{Var}_{i\sim p_{\theta,\phi}}(h_i)}{N_p}
    +
    \frac{\operatorname{Var}_{i\sim q_\phi}(e^{h_i})}{N_q},
    \qquad
    \operatorname{Var}\!\left(\widehat{\mathcal L}_{\mathrm{CQ}}\right)
    =
    \frac{\operatorname{Var}_{i\sim P_{\theta,\phi}^{p(\eta)}}(h_i)}{N_p}
    +
    \frac{\operatorname{Var}_{i\sim Q_\phi^{p(\eta)}}(e^{h_i})}{N_q}.
    \label{eq:finite-shot-variance}
\end{equation}

In the CQ case, the measurement-outcome distributions include the randomness of both input-label sampling and measurement outcomes. Both standard deviations therefore decrease as $O(N_p^{-1/2}+N_q^{-1/2})$. After training, these expressions also apply when all learned parameters are held fixed and the losses are evaluated using fresh, independent shots.

To optimize the variational losses, we evaluate their gradients with respect to $\theta$, $\phi$, and $h$, together with the classical input parameters $\eta$ for CQ channels. When each circuit parameter $\theta_k$ or $\phi_k$ enters through a single rotation $e^{-itP/2}$, with $t$ denoting that parameter and $P$ a Pauli operator, the parameter-shift rule~\cite{Mitarai2018} yields
\begin{equation}
    \frac{\partial\mathcal{L}}{\partial\theta_k}
    =
    \frac{1}{2}
    \left[
        \mathcal{L}\!\left(\theta_k+\frac{\pi}{2}\right)
        -
        \mathcal{L}\!\left(\theta_k-\frac{\pi}{2}\right)
    \right],
    \label{eq:shift}
\end{equation}
where only the indicated parameter is shifted while all other parameters are held fixed, and likewise for each $\phi_k$. The same rule applies to $\mathcal L_{\mathrm{CQ}}$ with $\eta$ held fixed. For each eigenvalue $h_i$, on the other hand, the gradients follow directly from Eqs.~\eqref{eq:loss} and~\eqref{eq:cq-variational-loss}, since measurement-outcome distributions do not depend on $h$:

\begin{equation}
    \frac{\partial\mathcal{L}}{\partial h_i}
    =
    p_{\theta,\phi}(i)
    -
    e^{h_i}\,q_{\phi}(i),
    \qquad
    \frac{\partial\mathcal L_{\mathrm{CQ}}}{\partial h_i}
    =
    P_{\theta,\phi}^{p(\eta)}(i)
    -
    e^{h_i}Q_\phi^{p(\eta)}(i).
    \label{eq:hgrad}
\end{equation}

For the CQ input parameters $\eta$, the gradient is obtained by differentiating the classical probabilities $p_x(\eta)$. For a differentiable parametrization $p(\eta)$, holding $(\theta,h,\phi)$ fixed gives
\begin{equation}
    \frac{\partial\mathcal L_{\mathrm{CQ}}}{\partial\eta_k}
    =
    \sum_x
    \frac{\partial p_x(\eta)}{\partial\eta_k}
    \left[
        \sum_i h_iP_{\theta,\phi}(i\mid x)
        -
        \sum_i e^{h_i}Q_\phi(i\mid x)
    \right].
    \label{eq:cq-etagrad}
\end{equation}
Here, $P_{\theta,\phi}(i\mid x)$ and $Q_\phi(i\mid x)$ denote the measurement-outcome distributions for the product-reference and labelled-output preparations, respectively, conditioned on the input label $x$. Neither distribution depends on $\eta$.

Training alternates between gradient-ascent updates of $(\phi,h)$ with $\theta$ fixed and gradient-descent updates of $\theta$ with $(\phi,h)$ fixed. For the CQ loss, these inner updates are performed at fixed $\eta$, followed by an ascent update of $\eta$ using the partial derivative above evaluated at the resulting inner parameters. The updated distribution $p(\eta)$ then determines the probabilities for sampling input labels in the next round of inner optimization. The numbers of discriminator and generator updates per training iteration, together with the number of outer iterations for CQ channels, are treated as hyperparameters.

\section{Numerical simulations}

\label{sec:sim}
In this section, we assess the QGAN-based optimization of Sec.~\ref{subsec:implementation} for bipartite states and CQ channels. We track the objective values during training and compare the resulting estimates with independently computed numerical references for $U^M$.

\subsection{Implementation}
\label{subsec:method}

We first prepare a target state $\rho_{AB}$. To this end, we apply a Haar-random unitary $U_{\mathrm{tar}}$ on $ABE$ and prepare
\begin{equation}
    \ket{\psi}_{ABE}
    =
    U_{\mathrm{tar}}\ket{0}^{\otimes(n_A+n_B+n_E)},
    \qquad
    \rho_{AB}
    =
    \Tr_E\!\left[\ket{\psi}\!\bra{\psi}_{ABE}\right],
    \qquad
    n_E=n_A+n_B,
    \label{eq:sim-target}
\end{equation}
so that $E$ purifies the mixed target state $\rho_{AB}$. Accordingly, each copy of $\rho_{AB}$ entering the circuits of Fig.~\ref{fig:framework} is realized by applying $U_{\mathrm{tar}}$ to a fresh register and leaving $E$ unmeasured.

For a CQ channel $\mathcal N:x\mapsto\rho_x$, we similarly prepare each output state $\rho_x$ entering the circuit of Fig.~\ref{fig:cq-sampling}(b) by applying a Haar-random unitary $U_x$ to fresh registers $BE$ and leaving $E$ unmeasured:
\begin{equation}
    \ket{\psi_x}_{BE}
    =
    U_x\ket{0}^{\otimes(n_B+n_E)},
    \qquad
    \rho_x
    =
    \Tr_E\!\left[\ket{\psi_x}\!\bra{\psi_x}_{BE}\right],
    \qquad
    n_E=n_B.
    \label{eq:sim-cq-target}
\end{equation}
The unitaries $U_x$ are sampled independently for the input labels and held fixed while the input distribution is optimized. Combining these outputs with their classical labels according to the input distribution $p$ gives the joint state $\omega_{XB}^{p}$ defined in Eq.~\eqref{eq:cq-labelled-state}. With these environment dimensions, the Haar-random construction can yield full-rank states in both settings. The classical descriptions of $\rho_{AB}$ and the output states $\rho_x$ are available within each run for computing the corresponding benchmark values.

In both settings, the parametrized circuits $U_{BR}(\theta)$ and $V^{\dagger}(\phi)$ are implemented using the hardware-efficient ansatz approach~\cite{Kandala2017}. The specific layer used here is shown in Fig.~\ref{fig:ansatz}. In each layer, independent $R_x$, $R_y$, and $R_z$ rotations are applied to every qubit, followed by a ring of CNOT gates connecting neighboring qubits, including the last and first qubits. The generator uses $L_G$ layers on $BR$, whereas the discriminator uses $L_D$ layers on $ABD$ for bipartite-state estimation and on $XBD$ for CQ channels.

\begin{figure}[t]
    \centering
    \begin{quantikz}[row sep={0.7cm,between origins}, column sep=0.4cm]
        & \gate{R_x} & \gate{R_y} & \gate{R_z} & \ctrl{1} & \qw      & \targ{}   & \qw \\
        & \gate{R_x} & \gate{R_y} & \gate{R_z} & \targ{}  & \ctrl{1} & \qw       & \qw \\
        & \gate{R_x} & \gate{R_y} & \gate{R_z} & \qw      & \targ{}  & \ctrl{-2} & \qw
    \end{quantikz}
    \caption{One layer of the hardware-efficient ansatz implementing
    $U_{BR}(\theta)$ and $V^{\dagger}(\phi)$, shown for three qubits. Each layer
    applies independent $R_x$, $R_y$, and $R_z$ rotations to every
    qubit, followed by a ring of CNOT gates between neighboring
    qubits, with the last qubit controlling the first.}
    \label{fig:ansatz}
\end{figure}

With these circuit parametrizations, the accompanying \texttt{PyTorch}~\cite{Paszke2019} implementation evaluates the full measurement-outcome distributions using a differentiable state-vector simulator. The CQ distributions $P_{\theta,\phi}^{p}$ and $Q_\phi^{p}$ are obtained by computing the conditional distributions $P_{\theta,\phi}(i\mid x)$ and $Q_\phi(i\mid x)$ for each input label $x$ and combining them with weights $p_x$, respectively.

For the numerical optimization, we use the classical Donsker--Varadhan (DV) representation of the Kullback--Leibler divergence~\cite{Belghazi2018} in place of the nonlogarithmic loss. The DV form follows by maximizing the nonlogarithmic loss over a common additive shift of all eigenvalues $h_i$. At fixed circuit parameters, the DV objective for bipartite states takes the form
\begin{equation}
    \mathcal{L}_{\mathrm{DV}}(\theta,h,\phi)
    =
    \sum_{i=0}^{d-1} h_i\,p_{\theta,\phi}(i)
    -
    \log\sum_{i=0}^{d-1} e^{h_i}\,q_{\phi}(i).
    \label{eq:dv}
\end{equation}
The corresponding DV objective for a CQ channel at a fixed input distribution $p$ is obtained by replacing $p_{\theta,\phi}$, $q_\phi$, and $d$ in this expression with $P_{\theta,\phi}^{p}$, $Q_\phi^{p}$, and $d_{\mathrm{CQ}}$, respectively. For fixed measurement-outcome distributions, this logarithmically normalized form and the corresponding nonlogarithmic loss have the same supremum over unrestricted eigenvalues, and hence characterize the same divergence.

To optimize the objective, we initialize the parameters as
\begin{equation}
    \theta_j^{(0)}\sim\mathcal{N}(0,1),
    \qquad
    \phi_k^{(0)}\sim\mathcal{N}\!\left(0,(0.01)^2\right),
    \qquad
    h_i^{(0)}=0,
    \qquad
    p_x^{(0)}=\frac{1}{|\mathcal X|},
    \label{eq:sim-initialization}
\end{equation}
where each rotation angle is sampled independently, and the initialization of $p$ applies only to CQ channels. The gradients of the DV objective are computed by automatic differentiation, and $\theta$, $\phi$, and $h$ are updated with separate Adam optimizers~\cite{Kingma2014}.

We follow the alternating updates of Sec.~\ref{subsec:implementation}. For CQ channels, we take $\eta=p$, update the input probabilities directly by gradient ascent, and project them onto the probability simplex. All learning rates remain at their base values for the first half of training and then decay to 10\% of those values over the second half according to a cosine schedule. After all parameter updates in each iteration, we evaluate the DV objective and record its value.

The register sizes, circuit depths, and training hyperparameters are summarized in Table~\ref{tab:sim-settings}. Each iteration uses one generator update and the listed number of joint $(\phi,h)$ updates; CQ training additionally uses one input-distribution update. These settings are shared by the representative training curves and the corresponding multi-target error plots.

\newpage

\subsection{Results}

The numerical experiments use full-rank target states and full-rank CQ output states, ensuring finite values of $U$ and $U^M$. For the benchmarks, we evaluate $U(A;B)_\rho$ using its closed-form expression in Eq.~\eqref{eq:umlaut-closed-form} and obtain $U(\mathcal N)$ by optimizing the corresponding fixed-input quantity over the input distribution. We additionally compute numerical references for $U^M(A;B)_\rho$ and $U^M(\mathcal N)$ through direct matrix optimization, independently of the variational training and without its circuit restrictions. Each measured reference is reported as the midpoint $\widetilde U^M=(L+B)/2$ of a numerically validated interval $[L,B]$, with $B-L\le10^{-6}$. The relative errors reported below are calculated against these numerical midpoints and do not include the uncertainty of the references. Details of the numerical reference calculations are provided in \cref{app:numerical-reference-calculations}.

The training plots show $U$ alongside $\widetilde U^M$. Comparing the trained values with $\widetilde U^M$ assesses estimation accuracy, whereas the separation of the two reference lines approximates the intrinsic gap $U-U^M$.

\begin{table}[!b]
    \centering
    \caption{Simulation settings for bipartite states and CQ channels. Qubit counts refer to $AB$ and $XB$, respectively. Learning rates are base values before cosine decay. Dashes denote parameters that do not apply.}
    \label{tab:sim-settings}
    \small
    \renewcommand{\arraystretch}{1.08}
    \begin{tabular}{@{}lrrrrrr@{}}
        \toprule
        & \multicolumn{3}{c}{Bipartite states} & \multicolumn{3}{c}{CQ channels} \\
        \cmidrule(lr){2-4}\cmidrule(lr){5-7}
        Qubits & $2$ & $4$ & $6$ & $2$ & $4$ & $6$ \\
        \midrule
        $n_A$ & $1$ & $2$ & $3$ & -- & -- & -- \\
        $|\mathcal X|$ & -- & -- & -- & $2$ & $4$ & $8$ \\
        $n_B$ & $1$ & $2$ & $3$ & $1$ & $2$ & $3$ \\
        $n_E$ & $2$ & $4$ & $6$ & $1$ & $2$ & $3$ \\
        $n_R$ & $1$ & $2$ & $3$ & $1$ & $2$ & $3$ \\
        $n_D$ & $3$ & $3$ & $0$ & $3$ & $3$ & $0$ \\
        \midrule
        Generator layers $L_G$ & $4$ & $4$ & $8$ & $4$ & $8$ & $8$ \\
        Discriminator layers $L_D$ & $20$ & $128$ & $256$ & $20$ & $64$ & $256$ \\
        \midrule
        Training iterations & $1000$ & $1000$ & $2000$ & $1000$ & $1000$ & $4000$ \\
        Joint $(\phi,h)$ updates per iteration & $5$ & $5$ & $20$ & $5$ & $5$ & $10$ \\
        Base learning rate for $\theta$ & $0.01$ & $0.01$ & $0.001$ & $0.01$ & $0.01$ & $0.001$ \\
        Base learning rate for $\phi$ & $0.02$ & $0.04$ & $0.002$ & $0.02$ & $0.02$ & $0.001$ \\
        Base learning rate for $h$ & $0.02$ & $0.01$ & $0.05$ & $0.02$ & $0.02$ & $0.05$ \\
        Base learning rate for $p$ & -- & -- & -- & $0.01$ & $0.01$ & $0.001$ \\
        \bottomrule
    \end{tabular}
\end{table}

Figure~\ref{fig:sim-bipartite-results} tracks training for a fixed bipartite target at each size, with decreasing variation across independent initializations. The two-qubit estimate approaches its reference within tens of iterations; the four-qubit estimate levels off around iterations 850--900 after prolonged fluctuations. The six-qubit overshoot subsides around iterations 200--300, with smaller corrections later. Their separation from $U$ includes the intrinsic gap $U-U^M$, which persists even for exact estimation of $U^M$.

Figure~\ref{fig:sim-bipartite-errors} extends the accuracy assessment to 30 targets per size. The mean absolute relative errors over all targets and runs are approximately $1.6\times10^{-5}\%$, $0.67\%$, and $0.13\%$ for the two-, four-, and six-qubit systems, respectively. Some four-qubit runs show larger deviations, possibly reflecting initialization sensitivity during alternating optimization.

Figure~\ref{fig:sim-cq-results} shows CQ-channel training curves. After initial overshoots, the two- and six-qubit estimates settle near their measured references within roughly 100--150 and 300--500 iterations, respectively. The four-qubit estimate rises more gradually and levels off around iterations 300--400. Run-to-run variation decreases later in training.

The outer optimization adapts the input ensemble when estimating $U^M(\mathcal N)$. In the two-qubit case, the probabilities settle at unequal values with $p_1>p_0$. For the four-qubit channel, the distribution becomes concentrated nearly equally on $x=0$ and $x=2$, while $p_1$ and $p_3$ reach zero. For six qubits, all eight probabilities remain nonzero and continue evolving after the information estimate levels off.

Figure~\ref{fig:sim-cq-errors} summarizes the final relative errors across multiple target CQ channels. For the two-, four-, and six-qubit systems, the mean absolute relative errors over all tested channels and runs are approximately $0.0077\%$, $6.0\%$, and $0.028\%$, respectively. Some four-qubit estimates remain about $10\%$ below their references, possibly reflecting incomplete maximization over the discriminator or input probabilities.

\begin{figure}[p]
    \setlength{\abovecaptionskip}{7pt}
    \centering
    \begin{minipage}[t]{0.48\linewidth}
        \centering
        \small
        \parbox[c][0.25\textheight][c]{\linewidth}{\centering\includegraphics[width=\linewidth]{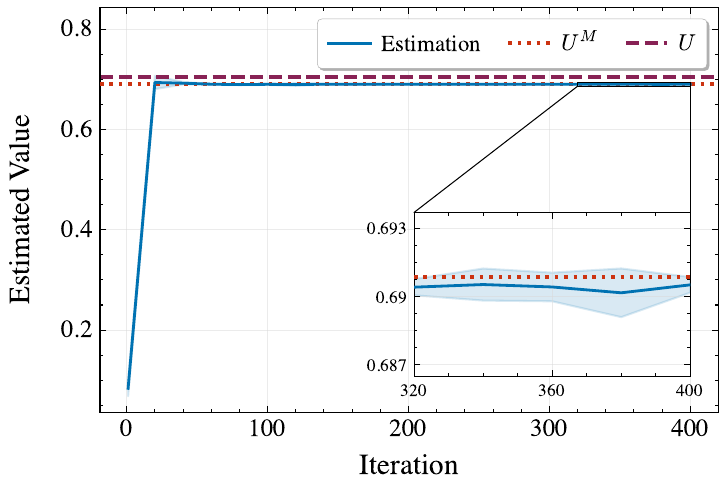}}\par
        (a) 2-qubit system\par
        \smallskip
        \parbox[c][0.25\textheight][c]{\linewidth}{\centering\includegraphics[width=\linewidth]{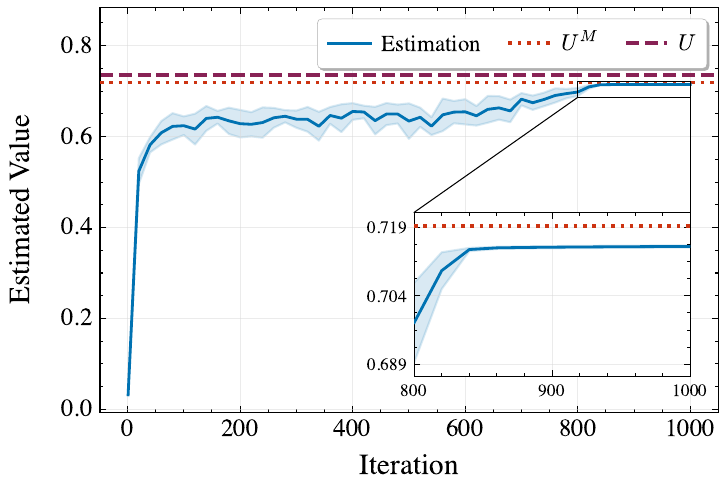}}\par
        (b) 4-qubit system\par
        \smallskip
        \parbox[c][0.25\textheight][c]{\linewidth}{\centering\includegraphics[width=\linewidth]{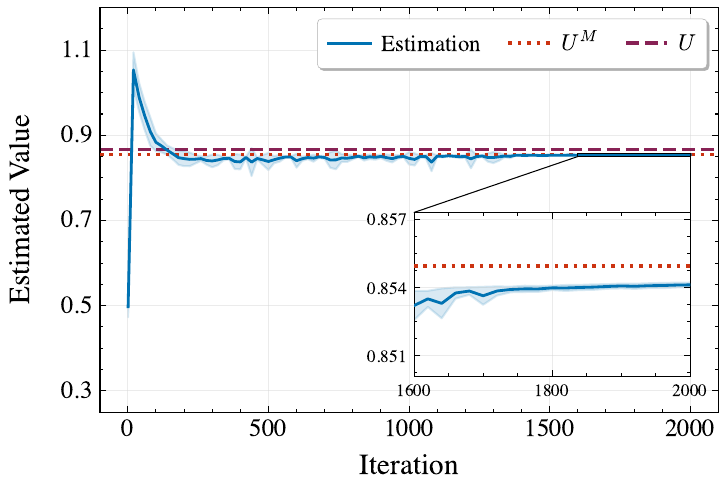}}\par
        (c) 6-qubit system\par
        \caption[Training curves for bipartite-state estimation.]{Training curves for two-, four-, and six-qubit bipartite states in (a)--(c), respectively. Solid curves and shading indicate the mean and $\pm1$ standard deviation over 10 independently initialized runs for a fixed target per panel. Some displayed iteration ranges are shortened to show the approach to a plateau. Training settings are given in Table~\ref{tab:sim-settings}.}
        \label{fig:sim-results}
        \label{fig:sim-bipartite-results}
        \label{fig:sim-two-qubit}
        \label{fig:sim-four-qubit}
    \end{minipage}\hfill
    \begin{minipage}[t]{0.48\linewidth}
        \centering
        \small
        \parbox[c][0.25\textheight][c]{\linewidth}{\centering\includegraphics[width=\linewidth]{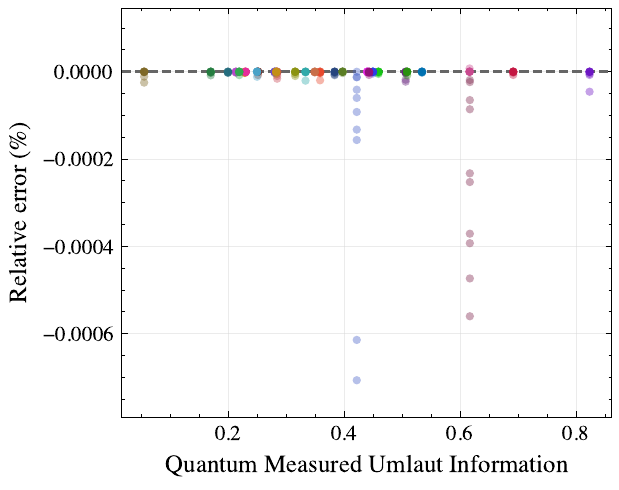}}\par
        (a) 2-qubit system\par
        \smallskip
        \parbox[c][0.25\textheight][c]{\linewidth}{\centering\includegraphics[width=\linewidth]{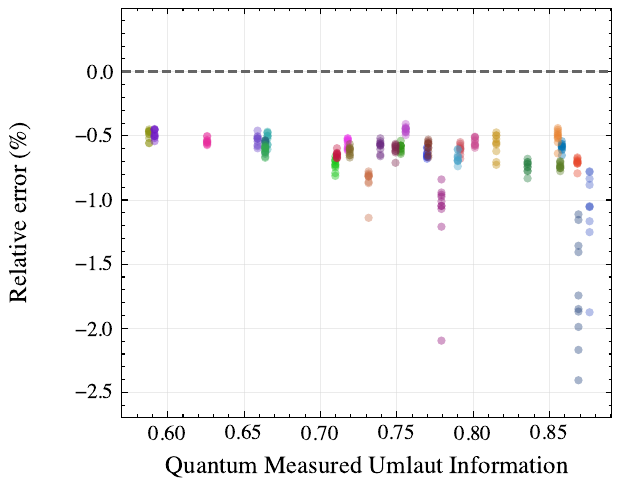}}\par
        (b) 4-qubit system\par
        \smallskip
        \parbox[c][0.25\textheight][c]{\linewidth}{\centering\includegraphics[width=\linewidth]{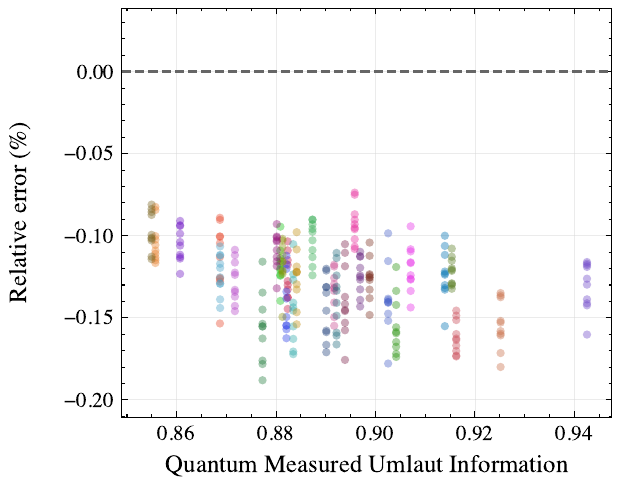}}\par
        (c) 6-qubit system\par
        \caption[Relative errors for bipartite-state estimation.]{Relative errors for bipartite-state estimation. Each panel shows 10 runs for each of 30 targets; each point is one run, and colors distinguish targets. Relative errors of the final estimates $\widehat U^M$ with respect to the numerical references $\widetilde U^M$ are expressed as percentages and plotted against $\widetilde U^M$. Vertical-axis ranges differ across panels. Training settings are given in Table~\ref{tab:sim-settings}.}
        \label{fig:sim-bipartite-errors}
    \end{minipage}
\end{figure}

\begin{figure}[p]
    \setlength{\abovecaptionskip}{2pt}
    \centering
    \begin{minipage}[t]{0.48\linewidth}
        \centering
        \small
        \parbox[c][0.257\textheight][c]{\linewidth}{\centering\includegraphics[width=0.88\linewidth]{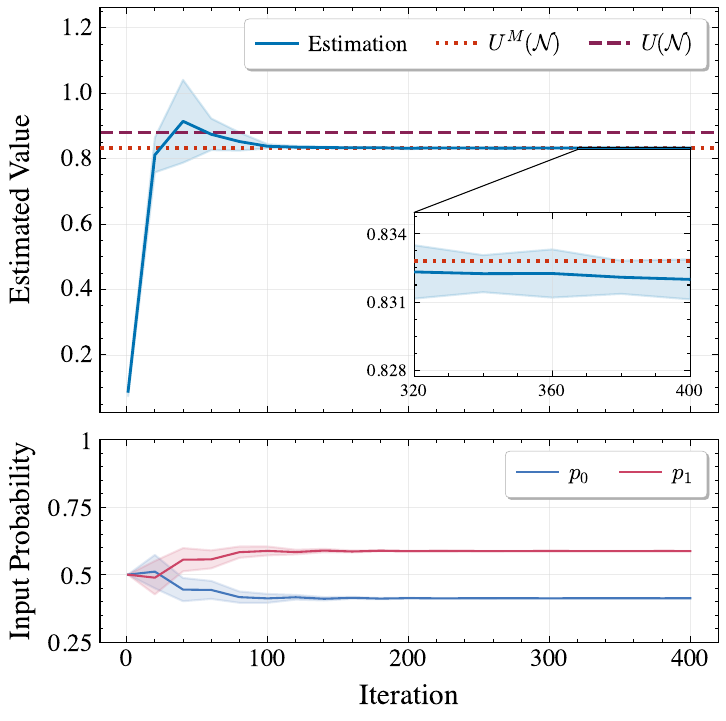}}\par\vspace{3pt}
        (a) 2-qubit system\par
        \vspace{6pt}
        \parbox[c][0.257\textheight][c]{\linewidth}{\centering\includegraphics[width=0.88\linewidth]{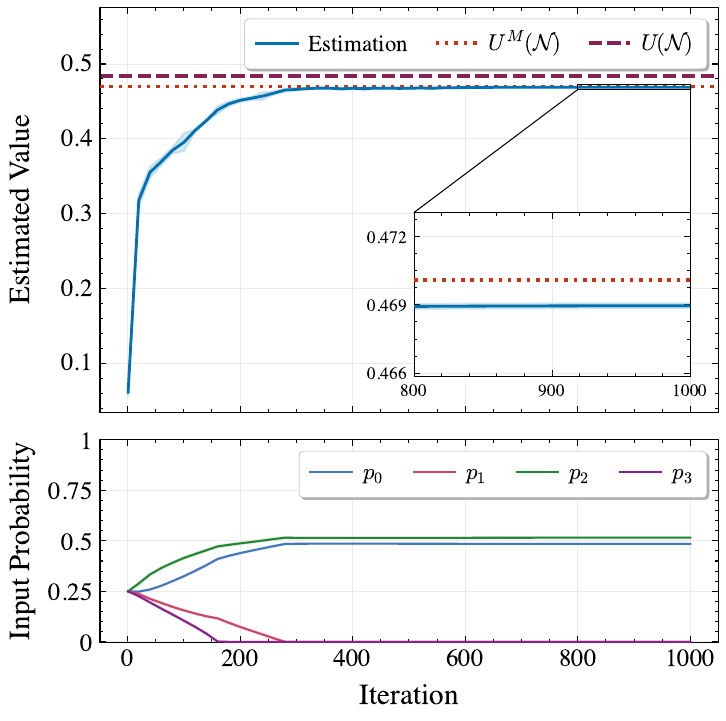}}\par\vspace{3pt}
        (b) 4-qubit system\par
        \vspace{6pt}
        \parbox[c][0.257\textheight][c]{\linewidth}{\centering\includegraphics[width=0.88\linewidth]{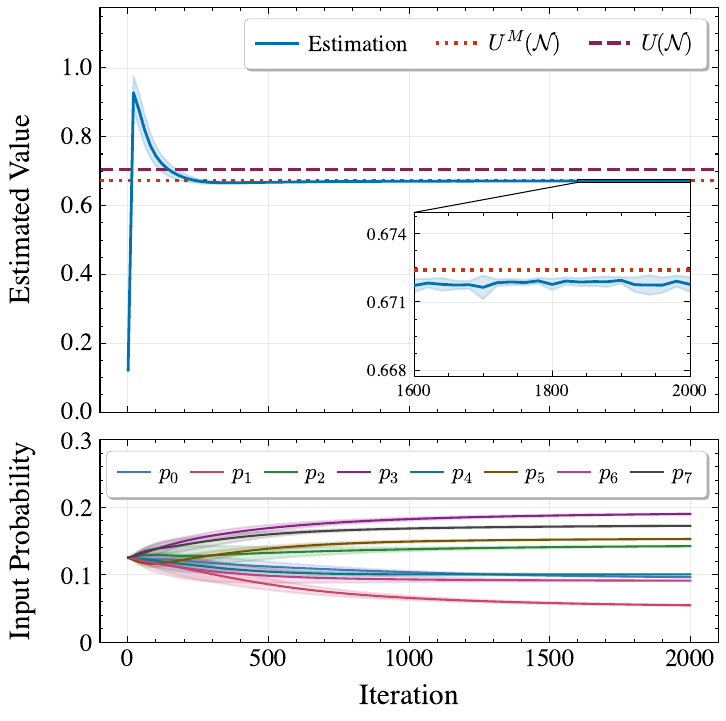}}\par\vspace{3pt}
        (c) 6-qubit system\par
        \caption[Training curves for CQ-channel estimation.]{Training curves for CQ-channel estimation, with two-, four-, and six-qubit joint label--output registers in (a)--(c), respectively. Upper and lower plots show the estimated information and input probabilities $p_x$. Solid curves and shading indicate the mean and $\pm1$ standard deviation over 10 independently initialized runs for a fixed channel per panel. Some displayed iteration ranges are shortened to show the approach to a plateau. Training settings are given in Table~\ref{tab:sim-settings}.}
        \label{fig:sim-cq-results}
    \end{minipage}\hfill
    \begin{minipage}[t]{0.48\linewidth}
        \centering
        \small
        \parbox[c][0.257\textheight][c]{\linewidth}{\centering\includegraphics[width=\linewidth]{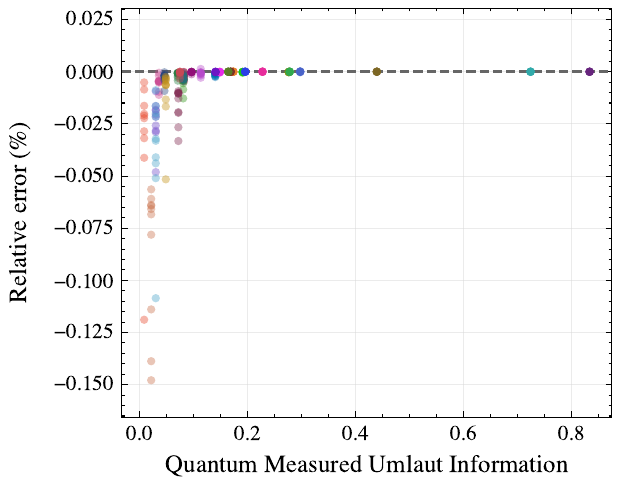}}\par\vspace{3pt}
        (a) 2-qubit system\par
        \vspace{6pt}
        \parbox[c][0.257\textheight][c]{\linewidth}{\centering\includegraphics[width=\linewidth]{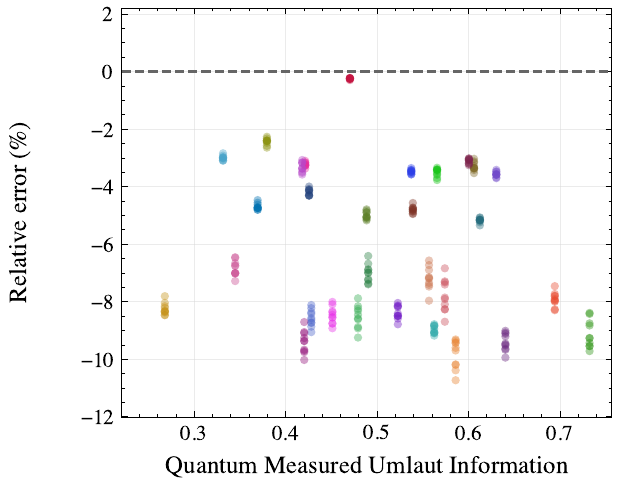}}\par\vspace{3pt}
        (b) 4-qubit system\par
        \vspace{6pt}
        \parbox[c][0.257\textheight][c]{\linewidth}{\centering\includegraphics[width=\linewidth]{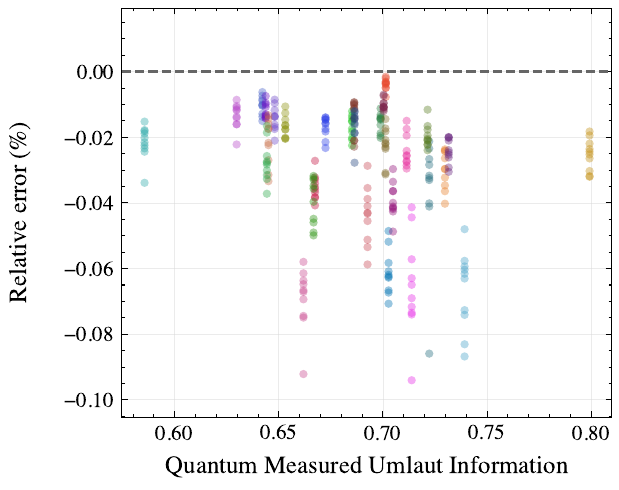}}\par\vspace{3pt}
        (c) 6-qubit system\par
        \caption[Relative errors for CQ-channel estimation.]{Relative errors for CQ-channel estimation; qubit counts refer to $XB$. Each panel shows 10 runs for each of 30 channels; each point is one run, and colors distinguish channels. Relative errors of the final channel estimates $\widehat U^M$ with respect to the numerical references $\widetilde U^M$ are expressed as percentages and plotted against $\widetilde U^M$. Vertical-axis ranges differ across panels. Training settings are given in Table~\ref{tab:sim-settings}.}
        \label{fig:sim-cq-errors}
    \end{minipage}
\end{figure}

\clearpage

\section{Concluding remarks}
\label{sec:conclusion}

We have developed a variational framework for estimating measured quantum umlaut information $U^M$ of bipartite states and CQ channels from quantum measurements. For bipartite states, the QGAN generator varies the reference state $\sigma_B$ while keeping $\rho_A$ fixed, and the discriminator optimizes a measurement basis and classical outcome scores. The unrestricted value of this adversarial game is $U^M(A;B)_\rho$. For CQ channels, a common reference state is used for all input labels, and an outer optimization over their probabilities gives $U^M(\mathcal N)$. The resulting objectives can be evaluated from measurement statistics without reconstructing the target state or the channel output states.

The framework was tested using complete measurement-outcome probabilities for mixed bipartite states of two, four, and six qubits and CQ channels whose joint label--output registers have the same total sizes. Comparisons with independently computed numerical references for $U^M$ show that the trained estimates are close to the measured references on average across the tested targets. These results support the numerical feasibility of the coupled generator, discriminator, and input-distribution optimizations.

We also analyzed the theoretical properties of $U^M$ and its relation to the original quantum umlaut information. For bipartite states, $U^M$ retains faithfulness and monotonicity under local channels. In the finite regime, we bound the intrinsic gap $U-U^M$ in terms of the target spectrum. For full-rank states, we obtain a sharper state-dependent bound, characterize when the gap vanishes, and establish local quadratic stability around equality states. The measured benchmarks assess the estimation of $U^M$, while approximation of $U$ additionally depends on the intrinsic gap. This gap comes from the choice of measured relative entropy and persists independently of circuit and optimization errors.

Estimating $U^M(A;B)$ for an unknown bipartite state provides a way to quantify correlations without reconstructing the density matrix. Comparing this quantity before and after a local operation can then reveal how well the operation preserves those correlations. This could be used to characterize quantum memories: a full-rank mixed probe is shared between a retained reference $A$ and the stored system $B$, and estimates are compared across storage times or noise conditions. Experiments storing and retrieving photonic entanglement illustrate the importance of correlation preservation in atomic-ensemble and solid-state memories~\cite{Choi2008,Clausen2011,Saglamyurek2011}. More recent experiments have demonstrated entanglement between quantum memory nodes connected by telecom fibres~\cite{Knaut2024}. In this setting, our framework would implement the comparison by training directly on repeated preparations before and after storage, yielding a diagnostic of total correlations that complements fidelity or entanglement tests.

Tracking this quantity over time can also reveal revivals of correlations. When $A$ is isolated and only $B$ evolves, $U^M(A;B)$ cannot increase if every intermediate evolution is a completely positive and trace-preserving map. A revival exceeding the estimation uncertainty would therefore indicate failure of this divisibility condition. Such a test could be used to diagnose memory effects in open quantum systems, where established approaches track system--ancilla entanglement~\cite{Rivas2010} or quantum mutual information~\cite{Luo2012,Chen2022}, and information-flow diagnostics have also been demonstrated experimentally~\cite{Fanchini2014}. Using mixed probes with finite information, our estimator could extend this correlation-based approach to $U^M$, learning its time dependence from repeated preparations and joint measurements without reconstructing the evolving states.

For an unknown CQ channel, estimating $U^M(\mathcal N)$ quantifies the correlations between its classical input labels and quantum outputs after optimizing the input distribution. The exact quantity vanishes precisely for input-independent replacer channels, so it provides a criterion for input--output dependence. Repeating the estimation could reveal how that dependence changes under drift or noise. This suggests applications to characterizing quantum transmitters and programmable state-preparation devices with finite channel information. Reconfigurable photonic circuits have demonstrated tunable one-qubit mixed-state preparation~\cite{Shadbolt2012}, while optical experiments have simulated single-qubit noise channels~\cite{Lu2017}. Such a CQ implementation would use classical labels to select preparation settings and their corresponding quantum signal states, as in the ensemble description of classical communication through quantum channels~\cite{Holevo1998}.

With controlled estimation errors, the channel estimate could also provide a lower bound on an operational error exponent. For CQ channels, the exact inequality $U^M(\mathcal N)\le U(\mathcal N)$ and the interpretation of $U(\mathcal N)$ as the optimal zero-rate exponent with activated non-signalling assistance establish this connection~\cite{GirardiQuantum2025}. Such bounds could serve as conservative benchmarks for comparing signal ensembles or operating conditions, while optimizing the input probabilities could search for larger measured bounds. This links the estimator to reliable communication at very low transmission rates. Existing work provides error-probability bounds for unassisted CQ channels~\cite{Dalai2013,DalaiWinter2017} and characterizes reliability exponents with activated non-signalling assistance~\cite{Oufkir2026}.

Several limitations of the present approach remain to be addressed. Circuit restrictions and incomplete adversarial optimization can shift the learned objective above or below $U^M$, so the inequality $U^M\le U$ does not by itself certify a bound from a trained estimate. Finite-shot training and device noise also require validation beyond the present simulations. In addition, assigning a separate trainable eigenvalue to every discriminator outcome leads to exponential growth in the spectrum parameters with the number of measured qubits. Future work should combine compact spectral parametrizations with quantitative control of circuit, optimization, and sampling errors. Such guarantees would allow variational estimates from quantum measurements to support certified assessments of correlations and channel reliability.

\section*{Data Availability Statement}
The data and software that support the findings of this study can be found in the following repository: 
\url{https://github.com/beomjoon01/QUIGAN}

\section*{Acknowledgements}
This work was supported by the National Research Foundation of Korea (NRF) through a grant funded by the Ministry of Science and ICT (Grant No. RS-2025-00515537), the Institute for Information and Communications Technology Promotion (IITP) grant funded by the Korean government (MSIP) (Grant Nos. RS-2019-II190003 and RS-2025-02304540), the National Research Council of Science and Technology (NST) (Grant No. GTL25012-401), and the Korea Institute of Science and Technology Information (Grant No. P26028). We acknowledge the Yonsei University Quantum Computing Project Group for providing support and access to the Quantum System One (Eagle Processor), which is operated at Yonsei University.

\bibliography{reference}

\clearpage
\appendix
\section{Proof of Proposition~\ref{prop:mqui-state-gap}}
\label{sec:proof-mqui-state-gap}
\begin{proof}
The closed-form optimizer in Proposition~\ref{prop:umlaut-closed-form} gives
\begin{equation}\label{eq:mqui-witness-partial-trace}
\begin{aligned}
\Tr_A\!\left[(\rho_A\otimes I_B)H^\star\right]
&=-S(\rho_A)I_B+\log\sigma_B^\star-X_B\\
&=U(A;B)_\rho I_B.
\end{aligned}
\end{equation}
Consequently, $\Tr[(\rho_A\otimes\sigma_B)H^\star]=U(A;B)_\rho$ for every density operator $\sigma_B$. Set $K=H^\star-\Gamma_\rho I_{AB}$, so that $\Tr[\rho_{AB}e^K]=1$. The variational expression of measured relative entropy in Lemma~\ref{lem:dm-variational} implies $D_M(\tau\Vert\rho_{AB})\ge\Tr(\tau K)-\Tr(\rho_{AB}e^K)+1$ for every Hermitian $K$. Evaluating this expression at the same $K$ for every $\sigma_B$ yields
\begin{equation}\label{eq:mqui-uniform-witness-bound} D_M(\rho_A\otimes\sigma_B\Vert\rho_{AB})\ge U(A;B)_\rho-\Gamma_\rho. \end{equation}
Minimizing over $\sigma_B$ proves $\Delta_\rho\le\Gamma_\rho$. The other two upper bounds follow from $U^M\ge0$ and the spectral gap bound in Theorem~\ref{thm:mqui-properties}. The Golden--Thompson inequality~\cite{Golden1965,Thompson1965} gives
\begin{equation}\label{eq:mqui-gamma-nonnegative} e^{\Gamma_\rho}=\Tr[\rho_{AB}e^{H^\star}]\ge\Tr e^{\log\rho_{AB}+H^\star}=\Tr\tau_{AB}^\star=1, \end{equation}
which establishes $\Gamma_\rho\ge0$.

If $[\rho_{AB},\tau_{AB}^\star]=0$, then $\rho_{AB}e^{H^\star}=\tau_{AB}^\star$ and hence $\Gamma_\rho=0$, giving $U^M=U$. Conversely, if $U^M=U$, then $U^M\le D_M(\tau_{AB}^\star\Vert\rho_{AB})\le D(\tau_{AB}^\star\Vert\rho_{AB})=U$ forces equality between the measured and Umegaki relative entropies at $\tau_{AB}^\star$. Both states are full rank, so the equality condition for measured relative entropy~\cite{BFT2017} implies $[\rho_{AB},\tau_{AB}^\star]=0$.
\end{proof}

\section{Proof of Corollary~\ref{cor:mqui-quadratic-stability}}
\label{sec:proof-mqui-quadratic-stability}
\begin{proof}
Set $g(\varepsilon)=\Gamma_{\rho(\varepsilon)}$. Full rank ensures that the matrix logarithms, the Gibbs expression for the umlaut marginal, and hence $g$ are twice continuously differentiable along this family. Proposition~\ref{prop:mqui-state-gap} gives $g(\varepsilon)\ge0$, and commutation at $\varepsilon=0$ gives $g(0)=0$. Thus $g$ has an interior minimum at zero and $g'(0)=0$.

On any sufficiently small closed interval $[-\delta,\delta]$, Taylor's theorem gives $g(\varepsilon)\le C\varepsilon^2$ with $C=\frac12\max_{|t|\le\delta}|g''(t)|<\infty$. Combining this with $\Delta_{\rho(\varepsilon)}\le g(\varepsilon)$ proves $0\le\Delta_{\rho(\varepsilon)}\le C\varepsilon^2$ for $|\varepsilon|\le\delta$.

The closed-form expression $U(A;B)_\rho=-S(\rho_A)-\log\Tr[e^{X_B}]$ also makes $U(A;B)_{\rho(\varepsilon)}$ continuous. If its value at zero is positive, reducing $\delta$ ensures $U(A;B)_{\rho(\varepsilon)}\ge U(A;B)_{\rho(0)}/2$. Dividing the absolute gap bound by this positive lower bound proves $\Delta_{\rho(\varepsilon)}/U(A;B)_{\rho(\varepsilon)}\le[2C/U(A;B)_{\rho(0)}]\varepsilon^2$ for $|\varepsilon|\le\delta$.
\end{proof}

\section{Numerical reference calculations}
\label[appendix]{app:numerical-reference-calculations}
This appendix describes the numerical evaluation of $U$ and $U^M$ for bipartite states and CQ channels from a given bipartite density matrix $\rho_{AB}$ and a given family of CQ-channel output states $\{\rho_x\}_{x\in\mathcal X}$, respectively. For the quantities requiring numerical optimization, we specify the optimization procedure and the construction of lower and upper bounds used to assess the accuracy of the resulting reference value.

The reference value of $U(A;B)_\rho$ for a full-rank state is obtained directly from Eq.~\eqref{eq:umlaut-closed-form}. For a CQ channel with full-rank output states $\rho_x\succ0$, define $A_p=\sum_xp_x\log\rho_x$. Minimization over the common reference state at fixed $p$ gives
\begin{equation}
    \begin{aligned}
        f(p)
        &:=\min_{\sigma_B\in\mathcal D(\mathcal H_B)}
        \sum_xp_xD(\sigma_B\Vert\rho_x)
        =-\log\operatorname{Tr}e^{A_p},\\
        U(\mathcal N)
        &=\max_{p\in\Delta(\mathcal X)}f(p).
    \end{aligned}
    \label{eq:reference-cq-u-objective}
\end{equation}

We maximize $f(p)$ over the probability simplex using sequential least-squares programming (SLSQP), starting from the uniform distribution. The minimizing reference state at fixed $p$ and the analytic gradient used in the optimization are
\begin{equation}
    \sigma_p=\frac{e^{A_p}}{\operatorname{Tr}e^{A_p}},
    \qquad
    g_x(p):=\frac{\partial f(p)}{\partial p_x}
    =-\operatorname{Tr}(\sigma_p\log\rho_x).
    \label{eq:reference-cq-u-gradient}
\end{equation}

For any feasible input distribution $p$, lower and upper bounds satisfying $L\le U(\mathcal N)\le B$ are
\begin{equation}
    L=f(p),
    \qquad
    B=f(p)+\max_{x\in\mathcal X}g_x(p)
    -\sum_xp_xg_x(p).
    \label{eq:reference-cq-u-bounds}
\end{equation}
The lower bound follows by evaluating $f$ at a feasible distribution. As the pointwise minimum of affine functions of $p$, $f$ is concave. Maximizing its affine upper bound $f(p)+\sum_xg_x(p)(q_x-p_x)$ over $q\in\Delta(\mathcal X)$ therefore gives $B$.

For the measured state calculation, write $\rho=\rho_{AB}\succ0$ and $\tau=\rho_A\otimes\sigma_B$. Optimizing the scalar shift of $H$ in Lemma~\ref{lem:dm-variational} gives the logarithmic objective
\begin{equation}
    F(\sigma_B,H)
    =\operatorname{Tr}(\tau H)
    -\log\operatorname{Tr}(\rho e^H).
    \label{eq:reference-state-objective}
\end{equation}
We search for stationary points of $F$ using a numerical root solver, with $H$ and $\sigma_B-I_B/d_B$ represented in traceless Hermitian bases, where $d_B=\dim\mathcal H_B$. The search is initialized using the umlaut marginal from Proposition~\ref{prop:umlaut-closed-form} and $H=\log\tau-\log\rho$, with the scalar component of $H$ removed. The pair $(H,\sigma_B)=(0,I_B/d_B)$ provides an alternative initialization.

For each feasible candidate, define
\begin{equation}
    W=\frac{e^H}{\operatorname{Tr}(\rho e^H)},
    \qquad
    K_B=\operatorname{Tr}_A\!\left[(\rho_A\otimes I_B)\log W\right],
    \label{eq:reference-state-witness}
\end{equation}
so that $W\succ0$ and $\operatorname{Tr}(\rho W)=1$. Let
\begin{equation}
    G=\mathrm D\log_W[\tau]
    =\int_0^\infty
    (W+tI)^{-1}\tau(W+tI)^{-1}\,dt
    \label{eq:reference-log-derivative}
\end{equation}
denote the Fr\'echet derivative of the matrix logarithm at $W$, applied to $\tau$. The lower and upper bounds are
\begin{equation}
    \begin{aligned}
        L&=\max\!\left\{0,\lambda_{\min}(K_B)\right\},\\
        B&=\operatorname{Tr}(\tau\log W)
        +\max\!\left\{
            0,\,
            \lambda_{\max}\!\left(\rho^{-1/2}G\rho^{-1/2}\right)
            -\operatorname{Tr}(GW)
        \right\}.
    \end{aligned}
    \label{eq:reference-state-bounds}
\end{equation}
These satisfy $L\le U^M(A;B)_\rho\le B$ for every feasible candidate, independently of whether the stationarity equations have been solved exactly.

The lower bound follows by evaluating the variational expression at fixed $W$, minimizing over $\sigma_B$, and using the nonnegativity of $U^M$. For the upper bound, concavity of $V\mapsto\operatorname{Tr}(\tau\log V)$ gives
\begin{equation}
    \operatorname{Tr}(\tau\log V)
    \le\operatorname{Tr}(\tau\log W)
    +\operatorname{Tr}[G(V-W)].
    \label{eq:reference-log-tangent}
\end{equation}
Under $V\succ0$ and $\operatorname{Tr}(\rho V)=1$, the supremum of $\operatorname{Tr}(GV)$ is $\lambda_{\max}(\rho^{-1/2}G\rho^{-1/2})$. Consequently, $U^M(A;B)_\rho\le D_M(\tau\Vert\rho)\le B$. Further solver calls refine the candidates as needed, and the narrowest computed interval is retained for the reference-value acceptance test.

For a CQ channel with full-rank output states $\rho_x\succ0$, resolving the classical label in the measurement gives
\begin{equation}
    U^M(\mathcal N)
    =\sup_{p\in\Delta(\mathcal X)}
    \min_{\sigma_B\in\mathcal D(\mathcal H_B)}
    \sum_{x\in\mathcal X}p_xD_M(\sigma_B\Vert\rho_x).
    \label{eq:reference-cq-objective}
\end{equation}
At fixed $p$, the logarithmic objective is the weighted sum of Eq.~\eqref{eq:reference-state-objective}, with $(\tau,\rho,H)$ replaced by $(\sigma_B,\rho_x,H_x)$. We use the same matrix parametrization and numerical root solver to search for stationary candidates, with a common density matrix $\sigma_B$ and a separate Hermitian operator $H_x$ for each label. The search is initialized with $\sigma_B=\sigma_p$ and $H_x=\log\sigma_B-\log\rho_x$, with scalar components removed as above. The alternative initialization uses $\sigma_B=I_B/d_B$ and $H_x=0$ for all $x$.

For each feasible candidate, define
\begin{equation}
    W_x=\frac{e^{H_x}}{\operatorname{Tr}(\rho_xe^{H_x})},
    \qquad
    G_x=\mathrm D\log_{W_x}[\sigma_B],
    \qquad
    d_x=\operatorname{Tr}(\sigma_B\log W_x),
    \label{eq:reference-cq-witnesses}
\end{equation}
where the logarithmic derivative is defined in Eq.~\eqref{eq:reference-log-derivative}. Starting from the uniform input distribution, SLSQP optimizes $p$ over the probability simplex using $\sum_xp_xd_x$ as the approximate inner value and the label-wise values $d_x$ to guide the updates. The inner matrix calculation is repeated at each proposed distribution, using preceding candidates as initial guesses.

Applying the tangent bound in Eq.~\eqref{eq:reference-log-tangent} to each output state gives
\begin{equation}
    b_x=d_x+
    \max\!\left\{
        0,\,
        \lambda_{\max}\!\left(\rho_x^{-1/2}G_x\rho_x^{-1/2}\right)
        -\operatorname{Tr}(G_xW_x)
    \right\},
    \label{eq:reference-cq-label-upper}
\end{equation}
so that $D_M(\sigma_B\Vert\rho_x)\le b_x$. The channel lower and upper bounds are
\begin{equation}
    L=\max\!\left\{
        0,\,
        \lambda_{\min}\!\left(\sum_xp_x\log W_x\right)
    \right\},
    \qquad
    B=\max_{x\in\mathcal X}b_x.
    \label{eq:reference-cq-bounds}
\end{equation}
The lower bound follows by fixing the witnesses, minimizing over the common reference state at the selected distribution $p$, and using nonnegativity, as in the measured state calculation. For the upper bound, the same feasible $\sigma_B$ gives
\begin{equation}
    \min_{\sigma'_B\in\mathcal D(\mathcal H_B)}
    \sum_xq_xD_M(\sigma'_B\Vert\rho_x)
    \le\sum_xq_xD_M(\sigma_B\Vert\rho_x)
    \le\sum_xq_xb_x
    \le B
    \label{eq:reference-cq-upper-justification}
\end{equation}
for every $q\in\Delta(\mathcal X)$. Hence $L\le U^M(\mathcal N)\le B$. The maximum defining $B$ includes all labels, including those with $p_x=0$. Accordingly, the interval width $B-L$ assesses the accuracy of the channel reference including the input-distribution optimization.

For each numerically optimized reference, a value is accepted only when the computed bounds satisfy $B-L\le10^{-6}$. The midpoint of the accepted interval is used in the numerical comparisons, with $\widetilde U^M=(L+B)/2$ denoting the measured reference.

\end{document}